\documentclass[11pt,onehalfspacing]{article}
\usepackage{graphicx}
\usepackage[dvipsnames,svgnames,x11names,hyperref]{xcolor}
\usepackage{amsthm, amsmath, amssymb}
\usepackage{thmtools, thm-restate}
\usepackage{xfrac}
\usepackage{dsfont}
\usepackage[T1]{fontenc}
\usepackage{enumitem}
\usepackage{fullpage}
\usepackage{xspace}%
\allowdisplaybreaks

\usepackage{thmtools, thm-restate}
\usepackage[colorlinks,pagebackref]{hyperref}
\hypersetup{
  linkcolor=[rgb]{0,0,0.4},
  citecolor=[rgb]{0, 0.4, 0},
  urlcolor=[rgb]{0.6, 0, 0}
}
\usepackage{mathrsfs,bbm}
\usepackage{todonotes}
\usepackage{tikz}
\usetikzlibrary{decorations.pathreplacing,backgrounds}
\usepackage{multirow}
\usepackage{setspace}

\def\Bezout{B\'{e}zout}

\newcommand{\N}{\mathbb N}

\newcommand{\F}{\mathbb F}

\newcommand{\cI}{{\mathcal{I}}}   
   \newcommand{\cL}{{\mathcal{L}}}
   
\newcommand{\cO}{{\mathcal{O}}}   \newcommand{\cP}{{\mathcal{P}}}

\renewcommand{\epsilon}{\varepsilon}
\newcommand{\eps}{\epsilon}

\newif\ifdraft
\drafttrue
\ifdraft
\newcommand{\ASnote}[1]{\textcolor{Blue}{\guillemotleft AS: #1 \guillemotright}}

\else
\newcommand{\ASnote}[1]{}
\fi

\newcommand{\dlog}{\mathsf{dlog}}

\newcommand{\calI}{\mathcal{I}}

\newcommand{\calL}{\mathcal{L}}
\newcommand{\calP}{\mathcal{P}}

\hypersetup{
  linkcolor=[rgb]{0,0,0.4},
  citecolor=[rgb]{0, 0.4, 0},
  urlcolor=[rgb]{0.6, 0, 0}
}

\newcommand{\ignore}[1]{}
\DeclareMathOperator{\Char}{char}

\usepackage[nameinlink,capitalise]{cleveref}

\usepackage{amsthm}
\usepackage{thmtools,thm-restate}

\numberwithin{equation}{section}
\declaretheoremstyle[bodyfont=\it,qed=\qedsymbol]{noproofstyle}

\declaretheorem[name=Observation,numbered=no]{observation*}

\declaretheorem[numberlike=equation]{theorem}

\declaretheorem[name=Theorem,numbered=no]{theorem*}

\declaretheorem[numberlike=equation]{lemma}
\declaretheorem[name=Lemma,numbered=no]{lemma*}

\declaretheorem[numberlike=equation]{corollary}
\declaretheorem[name=Corollary,numbered=no]{corollary*}

\declaretheorem[numberlike=equation]{proposition}
\declaretheorem[name=Proposition,numbered=no]{proposition*}

\declaretheorem[name=Claim,numbered=no]{claim*}

\declaretheorem[numberlike=equation]{conjecture}
\declaretheorem[name=Conjecture,numbered=no]{conjecture*}

\declaretheorem[name=Question,numbered=no]{question*}

\declaretheoremstyle[bodyfont=\it,qed=$\lrcorner$]{defstyle} 

\declaretheorem[numberlike=equation,style=defstyle]{definition}
\declaretheorem[unnumbered,name=Definition,style=defstyle]{definition*}

\declaretheorem[unnumbered,name=Example,style=defstyle]{example*}

\declaretheorem[unnumbered,name=Notation=defstyle]{notation*}

\declaretheorem[unnumbered,name=Construction,style=defstyle]{construction*}

\declaretheorem[numberlike=equation,style=defstyle]{remark}
\declaretheorem[unnumbered,name=Remark,style=defstyle]{remark*}

\title{Bounds on polynomial approximations of the discrete logarithm function}
\author{Swastik Kopparty\thanks{
Department of Mathematics and Department of Computer Science, University of Toronto, Canada. Email: \texttt{swastik.kopparty@utoronto.ca}.   Research partially supported
by an NSERC Discovery Grant.}
\and Rishabh Kothary\thanks{Department of Computer Science, University of Toronto, Canada. Email: \texttt{rishabh.kothary@mail.utoronto.ca}. Research partially supported by an NSERC Discovery Grant} \and Ashutosh Shankar\thanks{Tata Institute of Fundamental Research, Mumbai, India. Email:\texttt{ashutosh.shankar@tifr.res.in} Research conducted while the author was visiting the University of Toronto.}}

\begin{document}

\maketitle

\begin{abstract}
We revisit a classical result of Coppersmith and Shparlinski~\cite{CS00} showing bounds on how well polynomials can approximate the discrete log function. They showed that for any univariate polynomial $f(X) \in \F_p[X]$ with $\deg(f) \leq D$, we have:
$$ |\{ x \in \F_p^\times \mid f(x) = \dlog(x) \}|  \leq \sqrt{2 D p}.$$
We show two improvements:
\begin{itemize}
\item For all $D$, we improve the upper bound above to $(1+o(1)) \cdot \sqrt{Dp}$. In particular, we give nontrivial upper bounds on the approximation even for $D > p/2$.
This new upper bound is the {\em Johnson bound} from the theory of list-decoding of error-correcting codes, and in fact our proof uses a connection to that problem, as well as other
combinatorial ideas.

\item For $D = 1$, where both the above bounds are $\cO(\sqrt{p})$, we give an asymptotically improved upper bound of $\cO(p^{2/5})$. This is based on the sum-product phenomenon of additive combinatorics.
\end{itemize}

We also improve the~\cite{CS00} bounds on the number of  points of the form $(x, \dlog(x)) \in \F_p^2$ 
that any bivariate polynomial $Q(X,Y) \in \F_p[X,Y]$ with $\deg(Q) \leq D$ can vanish on,
with our new bounds approaching the analogue of the Johnson bound for this problem.
\end{abstract}

\section{Introduction}

The discrete logarithm function has been a central object in cryptography since
the work of Diffie and Hellman~\cite{DiffieHellman}. It plays a prominent role in 
innumerable cryptographic constructions. It is therefore of great importance to understand its algebraic and combinatorial characteristics, with the goal of identifying any computational weakness.

This paper is concerned with the best {\em low-degree approximation} to the discrete log function in the finite field $\F_p$, with prime $p$. Concretely, for a given degree bound $d$, we are interested in determining an upper bound $A$ such that for every polynomial $f(X) \in \F_p[X]$ of degree at most $D$,
we have:
$$ | \{ x \in \F_p \mid   f(x) = \dlog(x) \}| \leq A.$$
(Thus we are talking about approximation in terms of the number of points with exact equality, not any numerical approximation notion.)

This question was first studied by Coppersmith and Shparlinski~\cite{CS00}, who showed that
for any $f(X) \in \F_p[X]$ of degree at most $D$, 
$$ | \{ x \in \F_p \mid   f(x) = \dlog(x) \}| \leq \sqrt{2 D p}.$$

Subsequent work of Winterhof~\cite{Winterhof} extended this to non-prime finite fields $\F_q$.
Even further extensions, to fractional linear maps and beyond were given by
Kiltz and Winterhof~\cite{KW}, Isik and Winterhof~\cite{IW}, and Bienvenu and Winterhof~\cite{BW}. A different extension, of Niederreiter and Winterhof~\cite{NW}, gave bounds on the largest interval of agreement.
However there has been no improvement of the Coppersmith-Shparlinski bound over prime fields for any $D$.

Our first main result gives an improved bound. We show that 
for any $f(X) \in \F_p[X]$ of degree at most $D$, 
$$ | \{ x \in \F_p \mid   f(x) = \dlog(x) \}| \leq (1+ o(1)) \cdot \sqrt{D p}.$$
In particular, this gives nontrivial bounds even in the setting of $D > \frac{p}{2}$.

The quantity $\sqrt{Dp}$ in the upper bound is of special significance; it is the {\em Johnson agreement} parameter in the list-decoding of Reed-Solomon codes. 
The Johnson bound for Reed-Solomon codes states that if $g : \F_p \to \F_p$ is a function,
and there exist distinct polynomials $f_1(X), \ldots, f_L(X)$ of degree at most $D$ 
such that $|\{ x \mid f_i(x) = g(x) \}| \geq (1+ \epsilon) \sqrt{Dp}$, then $L \leq \cO_{\epsilon}(\sqrt{p/D})$.
There are two steps to our proof: (1) we show that if some polynomial $f(X)$ approximates $\dlog$ well, then there are many polynomials $f_1(X), f_2(X), \ldots, f_L(X)$ of the same degree
that also approximate $\dlog$ quite well, and
(2) using the list-decodability of Reed-Solomon codes to get a contradiction.

The first step uses the almost-homomorphism property of $\dlog$: $\dlog(x) + \dlog(y) \in \dlog(xy) + \{0,-1\}$, where the last term is because of the mismatch between the sizes of $\F_p$ and $\F_p^\times$.  The extra complexity introduced by the ``almost'' aspect is handled by some shifting arguments, along with Szemeredi's theorem from additive combinatorics.

One consequence of our approach is that improved bounds on the list-decoding of Reed-Solomon codes, with sublinear list size $L$, will give improved bounds for our problem. We think this is strong evidence in favour of the existence of better upper bounds\footnote{We remark that the very recent breakthrough of Brakensiek, Chen, Putterman, Zhang and Zheng \cite{brakensiek2026algorithmic} on list-decoding Reed-Solomon codes over $\F_p$ (see also the paper of Harsha, Kumar and Saptharishi~\cite{harsha2026reed}) does not yield anything yet because of the large polynomial list sizes in their result.}.
In fact, we believe that the true upper bound is much smaller:
\begin{conjecture}
    For any $f(X) \in \F_p[X]$ of degree at most $D$,
    $$ | \{ x \in \F_p \mid   f(x) = \dlog(x) \}| \leq  (1 + o(1) ) \cdot  D \cdot \frac{\log p}{\log (D+1)},$$
    as $p \to \infty$.
\end{conjecture}
The quantitative form is motivated by the best known impossibility results for list-decodability of Reed-Solomon codes over prime fields.

Our second main result makes some further progress towards this conjecture for the case of $D = 1$, i.e, linear polynomials. We show that for any $f(X) \in \F_p[X]$ of degree at most $1$:
   $$ | \{ x \in \F_p \mid   f(x) = \dlog(x) \}| \leq \cO(p^{2/5}).$$
This improves on the $\Theta(\sqrt{p})$ bound of Coppersmith-Shparlinski by a polynomial factor.

Our proof of the second result uses the sum-product phenomenon in finite fields~\cite{BourgainKT2004}, and specifically its consequence for incidences
between lines and points in $\F_p^2$~\cite{BourgainKT2004,lewko2026szemer,miao2026szemer,yu2026two}.  The main technical construction comes from a 2-source extractor construction of Bourgain~\cite{Bourgain} (see also Rao's exposition~\cite{Rao}), who showed how to construct good two-source extractors from the discrete log function using the line-point incidence theorem in finite fields.

Finally, we also mention that Coppersmith and Shparlinski proved another bound on approximating
the discrete log function with ``algebraic functions''. Specifically, they showed that for any bivariate polynomial $Q(X,Y) \in \F_p[X,Y]$ of degree at most $D$:
$$ | \{ (x,\dlog(x)) \in \F_p^2 \mid  Q(x,y) = 0 \} | \leq \sqrt{3} \cdot D \sqrt p.$$
We again improve this bound to the analogous Johnson bound\footnote{Here the relevant Johnson bound is the following statement: if $S \subseteq \F_p^2$ is a set of size $N$, and $Q_1(X,Y), \ldots, Q_L(X,Y)$ are distinct irreducible polynomials of degree at most $D$, such that each $Q_i$ vanishes at at least $(1+\epsilon) \cdot D \sqrt{N}$ points of $S$, then $L \leq O_{\epsilon}(\sqrt{N}/D)$. This follows from \Bezout's theorem and the classical Johnson bound on set systems with restricted intersections. } threshold, namely $(1 + o(1)) \cdot D \sqrt p$.

One final related work in the same spirit that we would like to mention is the paper
of Kedlaya and Kopparty~\cite{KedlayaKopparty}. They gave limits on the approximation by polynomials of the (multi-valued) ``square root'' function in finite fields.

\paragraph{AI use:} The results were obtained without any AI involvement. The writing was done without any AI involvement. We used AI for literature search and for feedback on our writeup.

\paragraph{Organization of this paper:}
In the next section we set up some notation and preliminaries about list decoding and additive combinatorics. In \cref{sec:conjugates} we prove our main structural lemmas about conjugates, which yields many approximations from one approximation. Our results for univariate polynomial approximation appear in \cref{sec:univariate}. Our results for bivariate polynomial approximation appear in \cref{sec:bivariate}. Our results for degree $D = 1$ appear in \cref{sec:linear}.

\section{Notation and preliminaries} \label{sec:prelims}

We work in a finite field $\F_p$ for some fixed prime $p$. We will often use $N$ to denote $p-1$ for ease of notation. Fix a generator $g$ of the field. The discrete logarithm is defined with respect to this generator.

\begin{definition}[Discrete logarithm]
    For $x \in \F_p^*$, $\dlog(x)$ is defined to be the 
    unique $j \in \{0,1, \ldots, p-2\}$ with $g^j = x$ in $\F_p$.
\end{definition}

We will invoke the following well-known result from algebra.

\begin{theorem}[\Bezout's theorem \cite{fulton1989algebraic}]
\label{thm:Bezout}
    Let $f,g \in \F[X,Y]$ such that $\deg(f) \leq d$, $\deg(g) \leq e$ and $f,g$ are co-prime. Then, the number of common zeroes of $f$ and $g$ is at most $d\cdot e.$
\end{theorem}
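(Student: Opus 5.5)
The statement just above is \Bezout's theorem. It is cited from \cite{fulton1989algebraic} rather than proved in the paper, but here is how I would prove it using resultants. First I reduce to the case where $\F$ is infinite by passing to the algebraic closure $\overline{\F}$. Coprimality is preserved under this extension, because $\gcd$ computations in $\F(X)[Y]$ commute with field extension. Common zeros over $\F$ are in particular common zeros over $\overline{\F}$, so it suffices to bound the latter.

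Suppose $f$ and $g$ have at least $de+1$ common zeros. I pick any $de+1$ of them, say $P_1,\dots,P_{de+1}$. Next I make a generic linear change of coordinates $(X,Y)\mapsto (X+cY, Y)$ with $c\in\overline{\F}$, chosen so that two things hold:
\begin{enumerate}
\item the points $P_i$ have pairwise distinct $X$-coordinates;
\item $f$ and $g$ become monic in $Y$ up to a nonzero constant.
\end{enumerate}
For the second condition, the leading $Y$-coefficient after the substitution is the top homogeneous part evaluated at $(c,1)$. Each condition excludes only finitely many values of $c$, so a suitable $c$ exists because $\overline{\F}$ is infinite. The substitution preserves degrees and coprimality.

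Now write $f=\sum_{i\le d} a_i(X)Y^i$ and $g=\sum_{j\le e} b_j(X)Y^j$, with $\deg a_i\le d-i$ and $\deg b_j\le e-j$, and with constant leading coefficients. Let $R(X)=\mathrm{Res}_Y(f,g)$ be the Sylvester determinant.

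\textbf{Step 1: $R\neq 0$.} The polynomials $f$ and $g$ are coprime in $\overline{\F}[X][Y]$. Since both are monic in $Y$, Gauss's lemma shows they have no common factor of positive $Y$-degree in $\overline{\F}(X)[Y]$. Hence the resultant is nonzero.

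\textbf{Step 2: $R(x_i)=0$ at each common zero.} For a common zero $P_i=(x_i,y_i)$, specialization commutes with the resultant because the leading coefficients are constants. So $R(x_i)$ is the resultant of two univariate polynomials with the common root $y_i$, and therefore $R(x_i)=0$.

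\textbf{Step 3: $\deg R\le de$.} Expand the determinant, assigning weight $d-i$ and $e-j$ to the respective entries. Every monomial term in the expansion then has $X$-degree at most $de$; this is the standard weighted-degree count for Sylvester matrices.

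Putting these together, $R$ is a nonzero polynomial of degree at most $de$ with $de+1$ distinct roots $x_1,\dots,x_{de+1}$, which is a contradiction.

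The main obstacle is the bookkeeping in Step 3. One has to index the Sylvester matrix rows and columns correctly and show that the entry in position $(r,s)$ has degree at most $\alpha_s-\beta_r$ for suitable weights. Then every permutation term has total degree at most $\sum\alpha-\sum\beta=de$.
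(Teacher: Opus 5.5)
Your resultant argument is correct, and it takes a different route from the one the paper relies on. The paper gives no proof; it cites Fulton, where this inequality follows from the projective B\'ezout theorem $\sum_P I(P,F\cap G)=\deg F\cdot\deg G$ for curves with no common component over an algebraically closed field. That theorem is proved via intersection multiplicities and a dimension count of the graded pieces of $k[X,Y,Z]/(F,G)$. You instead use a generic shear $X\mapsto X+cY$ to make both polynomials monic in $Y$ and to separate the $X$-coordinates of the chosen $de+1$ points. Gauss's lemma then gives $\mathrm{Res}_Y(f,g)\neq 0$, and the weighted count $\sum_{s=1}^{d+e}s-\sum_{r=1}^{e}r-\sum_{r=1}^{d}r=de$ bounds the degree of the Sylvester determinant. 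Your route is shorter and entirely elementary, and it yields exactly the inequality for distinct points that the paper uses. Fulton's route needs more machinery but gives the exact count with multiplicities in the projective plane.

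One justification should be completed. The shear turns a common factor $h(X)$ into $h(X+cY)$, which has positive $Y$-degree, so you need $f,g$ to have no common factor at all over $\overline{\F}$. Your remark about $\gcd$s in $\F(X)[Y]$ only excludes common factors of positive $Y$-degree. The missing case takes one line: a common factor in $\overline{\F}[X]$ would divide every $Y$-coefficient of $f$ and of $g$. These coefficients lie in $\F[X]$, their $\gcd$ there is $1$ by coprimality over $\F$, and a B\'ezout identity for them in $\F[X]$ persists in $\overline{\F}[X]$.
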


We will also talk about sumsets. For a subset $V$ of an abelian group $G$,
we let $m \cdot V$ denote the $m$-fold sumset $V+V+ \ldots + V$.

\subsection{Johnson bound}



The Johnson bound is generally stated as a bound on the number of codewords within a certain radius of a received word, given that the code has some minimum distance. Simple forms of the Johnson bound are proved by counting methods. We use the following rearrangement of standard Johnson bound statements, for instance Theorem 4.2 of \cite{GoldreichRS00}, so it is phrased as a bound on the size of the agreement sets (i.e. a lower bound on the radius) given a list of codewords and the minimum distance of the code.

\begin{lemma}[Johnson Bound]\label{lem:Johnson-Bound}
    Let $T_1,...,T_L \subset [n]$ be such that $|T_i| \geq A$ for all $i \in [L],$ and for any $i \neq j \in [L]$, $|T_i \cap T_j| \leq r$, then
    \begin{equation*}
        A \leq \frac{n}{2L} + \sqrt{\frac{n^2}{4L^2} + nr}.
    \end{equation*}
\end{lemma}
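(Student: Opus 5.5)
We prove the Johnson bound of \cref{lem:Johnson-Bound} by double counting, via Cauchy--Schwarz applied to point multiplicities. For each $x \in [n]$ let $d_x = |\{ i \in [L] : x \in T_i\}|$ be the number of sets containing $x$. The plan is to count incidences and pairs of incidences in two ways, and then compare the two counts.

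First, $\sum_{x} d_x = \sum_i |T_i| \geq LA$. Second, the number of ordered pairs $(i,j)$ with $i \neq j$ and a common point $x$ is
$$\sum_x d_x(d_x - 1) = \sum_{i \neq j} |T_i \cap T_j| \leq L(L-1) r \leq L^2 r.$$
By Cauchy--Schwarz, $\sum_x d_x^2 \geq \frac{1}{n}\left(\sum_x d_x\right)^2$. Write $S = \sum_x d_x$. The two counts then combine into
$$\frac{S^2}{n} - S \leq L^2 r .$$

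The function $S \mapsto S^2/n - S$ is increasing for $S \geq n/2$. If $LA < n/2$, then $A < n/(2L)$ and the claimed bound holds trivially. Otherwise $S \geq LA \geq n/2$, so by monotonicity we may replace $S$ by $LA$:
$$\frac{L^2A^2}{n} - LA \leq L^2 r .$$
Dividing by $L^2/n$ gives the quadratic inequality $A^2 - \frac{n}{L}A - nr \leq 0$. Hence $A$ is at most the larger root,
$$A \leq \frac{n}{2L} + \sqrt{\frac{n^2}{4L^2} + nr},$$
which is exactly the statement of \cref{lem:Johnson-Bound}.

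The only point needing care is the monotonicity step that lets us pass from $S$ to its lower bound $LA$; the case split on whether $LA \geq n/2$ handles it. We also use the harmless relaxation $L(L-1)r \leq L^2 r$, which costs nothing in the final bound.
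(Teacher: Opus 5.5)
Your proof is correct. The double count $\sum_x d_x(d_x-1)=\sum_{i\neq j}|T_i\cap T_j|$, Cauchy--Schwarz, the case split on $LA\ge n/2$ that justifies replacing $S$ by $LA$, and the final quadratic in $A$ all check out. The relaxation $L(L-1)r\le L^2r$ uses $r\ge 0$, which is automatic once $L\ge 2$.

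The paper gives no proof of its own here. It quotes the lemma as a rearrangement of the standard Johnson bound (Theorem 4.2 of \cite{GoldreichRS00}), whose usual proof is this same Cauchy--Schwarz double count, so your argument is a self-contained version of what the paper cites. Keeping $L(L-1)$ instead of $L^2$ would even give the slightly sharper $\frac{L-1}{L}\,nr$ under the square root.
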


\subsection{Theorems from additive combinatorics}

In the sections on low-degree univariate and bivariate polynomials (\cref{sec:univariate,sec:bivariate}), we will use results on arithmetic-progression-free sets. Szem\'eredi's Theorem states that a large enough set of natural numbers must contain a long arithmetic progression. Below we state a quantitative version due to Gowers.

\begin{theorem}[\cite{Gow01}]\label{thm:Szemeredi-Theorem-Gowers}
     For every $k \in \N$, there exists a constant $n_k$ such that for any $n \geq n_k$ and $S \subset [n]$ where $|S| \geq n\cdot (\log\log n)^{-2^{-2^{k+9}}}$, then $S$ contains a $k$ term arithmetic progression.
\end{theorem}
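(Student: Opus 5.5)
This statement is imported from \cite{Gow01}, and within this paper we would simply cite it. If one wanted to reprove it, the natural route is Gowers' \emph{density increment} strategy, organized around the uniformity norms $\|\cdot\|_{U^{k-1}}$. Write $\delta = |S|/n$. Embed $[n]$ into $\mathbb{Z}/N\mathbb{Z}$ with $N$ a prime of size about $2kn$, so that progressions do not wrap around. Then consider the balanced function $f = 1_S - \delta 1_{[n]}$.

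The argument is an iteration. We look for a long arithmetic progression $P' \subseteq [n]$ on which $S$ has relative density at least $\delta + c(k)\delta^{C(k)}$. The density can increase only about $\delta^{-C(k)}$ times before exceeding $1$. Each step shrinks the length from $m$ to roughly $m^{c'(k)}$. Hence the iteration survives provided $n$ exceeds a tower-free bound of the form $\exp\exp(\delta^{-C'(k)})$. Inverting this relation gives a density threshold $\delta \geq (\log\log n)^{-c_k}$, which is consistent with the exponent $2^{-2^{k+9}}$ after bookkeeping.

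Each step of the iteration splits into two cases.

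\begin{itemize}
\item \textbf{Uniform case.} Here $\|f\|_{U^{k-1}}$ is small. A generalized von Neumann inequality, proved by repeated Cauchy--Schwarz, shows that the count of $k$-term progressions in $S$ is $\delta^k N^2/\text{const} + O(\|f\|_{U^{k-1}} N^2)$. This is positive, and in fact exceeds the trivial progressions, once $N \gg \delta^{-k}$.
\item \textbf{Non-uniform case.} Here $\|f\|_{U^{k-1}} \geq c\,\delta^{C}$, and we must convert this large norm into a density increment on a long subprogression.
\end{itemize}

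The main obstacle is the non-uniform case for $k \geq 4$. For $k=3$, Roth's Fourier argument suffices: a large Fourier coefficient gives a linear phase. We partition $[n]$ into long progressions on which that phase is nearly constant, and averaging yields the increment. For general $k$, I would follow Gowers' local inverse theorem. The first step is to show that a large $U^{k-1}$ norm forces many derivatives $\Delta_{h_1}\cdots\Delta_{h_{k-3}} f$ to have large Fourier coefficients. The next step combines Freiman's theorem in Ruzsa's quantitative form with an additive-quadruple argument (Balog--Szemer\'edi--Gowers) to show these frequencies vary ``almost linearly'' in the $h_i$. Inducting down, this shows $f$ correlates, on each member of a partition of $[n]$ into progressions of length $n^{c(k)}$, with a polynomial phase $e(\phi(x)/N)$ of degree $k-2$.

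The final step uses Weyl-type simultaneous recurrence for polynomials, which lets us further chop each such progression into subprogressions of length $n^{c'(k)}$ on which the polynomial phase is almost constant. Pigeonholing over these pieces then gives the increment. I expect the Freiman/additive-quadruple step to carry essentially all of the difficulty and all the loss in the exponents.
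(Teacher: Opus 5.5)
Your proposal is correct and matches the paper, which gives no proof of its own and imports the statement directly from \cite{Gow01}; your outline (density increment, generalized von Neumann in the uniform case, Gowers' local inverse theorem via Balog--Szemer\'edi--Gowers and Freiman--Ruzsa, then Weyl-type subdivision) is a faithful roadmap of that source. One minor correction: for $k \geq 4$, each step of Gowers' argument produces progressions of length about $m^{\delta^{C(k)}}$ rather than $m^{c'(k)}$, because the Freiman/Bohr-set structure it passes through has dimension growing polynomially in $1/\delta$; with only $\delta^{-C(k)}$ iterations this still gives the $\exp\exp(\delta^{-C'(k)})$ threshold you state.
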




In the section on linear functions (\cref{sec:linear}), we will invoke the following bound on line-point incidences. Let $\calP \subseteq \F_p^2$ be a set of points and $\calL$ a set of lines in $\F_p^2$. An incidence is a pair $(a, \ell)$ of $a \in \calP$ and $\ell \in \calL$ with $a \in \ell$. The set of incidences is denoted $\calI(\calP, \calL)$. The first nontrivial upper bounds on the number of incidences over prime fields were obtained by Bourgain, Katz and Tao \cite{BourgainKT2004}. The main challenge that \cite{BourgainKT2004} had to overcome is that finite fields which contain subfields actually have point-line configurations with many incidences; a nontrivial upper bound needs to use the absence of subfields of $\F_p$. The very recent breakthrough works of Lewko \cite{lewko2026szemer} and Miao and Xie \cite{miao2026szemer} gave optimal quantitative estimates for the number of incidences (see also the elegant paper of Yu \cite{yu2026two}). We will use the following estimate by Lewko \cite{lewko2026szemer}:


\begin{theorem}[\cite{lewko2026szemer}]\label{thm:line-point-incidence-strong}
    Let $\cP$ be a set of $m$ points in $\F_q^2$ and $\cL$ be a set of $n$ lines in $\F_q^2$. Let $p= \Char(\F_q)$ and if $mn \leq p^3$, then 
    \begin{equation*}
        |\cI(\cP,\cL)| \leq 3(mn)^{2/3} + m + n.
    \end{equation*}
\end{theorem}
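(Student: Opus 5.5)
Since this is a cited result, the plan below is modelled on how sharp incidence bounds are usually obtained, not on the argument in \cite{lewko2026szemer}. The aim is to bootstrap a Szemer\'edi--Trotter type bound out of an algebraic incidence estimate in higher dimension, exploiting that $\F_p$ (the prime subfield of $\F_q$) has no nontrivial subfields.

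\textbf{Reductions.} First dispose of the degenerate ranges. By Cauchy--Schwarz, two lines share at most one point, which gives $|\cI| \leq m n^{1/2} + n$; dually, $|\cI| \leq n m^{1/2} + m$. So we may assume $n^{1/2} \leq m \leq n^2$, and by projective duality we may further take $m \leq n$. Next, discard lines containing at most $2$ points of $\cP$ and points on at most $2$ lines. These contribute $O(m+n)$ incidences, which accounts for the $m+n$ term. Finally, dyadically pigeonhole so that every line carries about $k$ points. It then suffices to show that the number $n_k$ of $k$-rich lines satisfies $n_k \lesssim m^2/k^3 + m/k$ whenever $m n_k \leq p^3$. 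This is the usual rich-lines formulation, and it implies $|\cI| \lesssim (mn)^{2/3} + m + n$.

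\textbf{Counting rich lines through an energy.} For $k$-rich lines, I would count collinear triples, or equivalently the solutions of a multiplicative/additive energy equation obtained by parametrising each line as $y = cx + d$. Each $k$-rich line contributes about $k^3$ collinear triples, so the goal becomes an upper bound $T \lesssim m^2$ on the number $T$ of collinear triples, with corrections when $m$ is comparable to $p$. To bound $T$, encode a collinear triple through a fixed point as an incidence between points and planes in $\F_q^3$. The coordinates are built from pairs of points of $\cP$, giving about $m^2$ points and $m^2$ planes. I would then apply Rudnev's point--plane incidence theorem, which states $I \lesssim n\sqrt{m'} + kn$ when $m' \leq p^2$, where $k$ is the maximum number of points on a line. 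The collinear-point term $k$ is controlled by the original $k$-rich hypothesis. This route already yields a nontrivial power saving, of the type $|\cI| \lesssim (mn)^{11/15}$.

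\textbf{Main obstacle.} The hard step is pushing the exponent from $11/15$ to the sharp $2/3$ with an absolute constant such as $3$. Rudnev's theorem loses exactly in the lift to $\F_q^3$. I would try a Stepanov-type polynomial method directly in the plane: construct an auxiliary polynomial of controlled degree vanishing to high order at the points of $\cP$ along the directions of the lines. The hypothesis $mn \leq p^3$ is exactly what is needed to keep derivatives of order below $p$ meaningful, so that Hasse derivatives behave as in characteristic zero. A Guth--Katz or Solymosi--Tao style cell decomposition is unavailable, since $\F_q$ has no ordering, so the auxiliary-polynomial multiplicity count would have to replace partitioning. Getting the degree bookkeeping to produce precisely $(mn)^{2/3}$ is the step I expect to be the main difficulty.
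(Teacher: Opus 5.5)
The paper does not prove this statement. It is quoted from \cite{lewko2026szemer} and used only as a black box in the proof of \cref{thm:linear-main}. So your proposal has to stand on its own, and it does not: the step that carries all of the content is missing.

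Your reductions are standard and essentially correct. They cover the Cauchy--Schwarz regimes, duality, pruning, and the dyadic rich-lines formulation. There is one small slip: $|\cI|\le mn^{1/2}+n$ comes from ``two points lie on at most one line'', whereas ``two lines meet in at most one point'' gives the dual bound $nm^{1/2}+m$. But these reductions only bring you to the rich-lines bound $n_k\lesssim m^2/k^3+m/k$, which is equivalent to the theorem itself.

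Your second paragraph does not prove that bound. The stated goal $T\lesssim m^2$ for collinear triples is false in general, since $m$ collinear points already give $\binom{m}{3}$ triples. You would have to count only triples on the selected $k$-rich lines, and then $n_k k^3\lesssim m^2+mk^2$ is just the target restated. The lift to point--plane incidences in $\F_q^3$ is left unspecified: you never say which $\sim m^2$ points and planes you use. Also, the $k$ in Rudnev's theorem counts collinear points of the lifted set in $\F_q^3$, not points of $\cP$ on a line, so your $k$-rich hypothesis does not automatically control it. And by your own account this route stops at the Stevens--de Zeeuw exponent $11/15$.

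Your final paragraph names a technique rather than giving an argument. You construct no auxiliary polynomial, choose no degree or vanishing multiplicities, and state no vanishing or extension lemma. The claim that $mn\le p^3$ is exactly what keeps Hasse derivatives under control is asserted without justification. As written, the proposal establishes only the trivial bound $|\cI|\le\min(mn^{1/2}+n,\,nm^{1/2}+m)$.

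Even a completed version of your scheme would give $O((mn)^{2/3}+m+n)$ rather than the literal $3(mn)^{2/3}+m+n$. It is phrased with $\lesssim$, pruning alone costs up to $2(m+n)$, and dyadic summation adds further constants. That weaker form would still suffice for \cref{thm:linear-main}, where the final contradiction pits $c^3$ against $c^{4/3}$. The exponent $2/3$, however, is indispensable there: an $(mn)^{11/15}$ bound fed into the same argument yields at best $|T|\lesssim p^{11/23}$ instead of $O(p^{2/5})$. The right move is what the paper does, namely to cite \cite{lewko2026szemer} (see also \cite{miao2026szemer,yu2026two}) rather than attempt a reproof along these lines.
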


\section{Agreement sets and conjugates} \label{sec:conjugates}

\subsection{Distinctness of Conjugates}

In this section, we define the conjugates of a bivariate polynomial $Q(X,Y) \in \F_p[X,Y]$ and prove some properties of the conjugates. At the high level,
the conjugates $Q_{i,b}(X,Y)$ of $Q(X,Y)$ are bivariate polynomials such that if 
$Q(X,Y)$ vanishes at many points of the form $(g^\alpha, \alpha)$, then so do (many of) the conjugates. They are based on the algebraic properties of the discrete log function.


\begin{definition}[Conjugates] \label{def:conjugates}
    Let $Q(X,Y) \in \F_p[X,Y]$ be a non-zero polynomial, then for any $i \in \{0,...,p-2\}$ and $b \in \{0,1\}$, define 
    \begin{equation*}
        Q_{i,b}(X,Y) = Q(g^iX,Y+i+b).
    \end{equation*}
    These are called the conjugates of $Q(X,Y)$. 
\end{definition}

These conjugates will serve as our ``list elements''. Our proof will proceed by showing that the agreement set cannot be too large if all of these are to be in the list. 

We formally define this agreement set below, along with some conjugate agreement sets.

\begin{definition}[Agreement set $T$] \label{def:agreement set}
    Let $Q(X,Y) \in \F_p[X,Y]$ be a non-zero polynomial, then define $T(Q) \subset \F_p$ as
    \begin{equation*}
        T(Q) = \{ \alpha \in \{0,...,N-1\} \text{ $|$ }Q(g^\alpha,\alpha) = 0\}.
    \end{equation*}
     Further for any $i \in [N-1]$, define
     \begin{align*}
        T(Q)_{i,0} &= \{ \alpha \in \{0,...,N-i-1\} \text{ $|$ }\alpha+i \in T(Q) \}\\
        T(Q)_{i,1} &= \{ \alpha \in \{N-i,..,N-1\} \text{ $|$ } \alpha +i -N \in T(Q) \}.
     \end{align*}
     In particular note that $|T(Q)| = |T(Q)_{i,0}| + |T(Q)_{i,1}|$.
\end{definition}

The latter set $T(Q)_{i, 1}$ is meant to handle the case when $\alpha +i$ exceeds $N$. Notice that for $\alpha \in T(Q)_{i,1}$, we have $\alpha + i - N = \alpha + i - (p-1) = \alpha + i + 1$.

Now we will show that $T(Q)_{i,b}$ contain the projection of the roots of $Q_{i,b}$ onto the second coordinate.

\begin{proposition}\label{prop:conjugates-and-S(Q)}
    For any $(i,b) \in [N-1] \times \{0,1\}$, let $\alpha \in T(Q)_{i,b}$. Then $Q_{i,b} (g^\alpha,\alpha) = 0$.
\end{proposition}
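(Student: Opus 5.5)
The plan is to unwind the definitions and use two facts: $g^{N}=1$ in $\F_p$, and integers are reduced mod $p$ when they sit in the second coordinate of $Q$. We treat the cases $b=0$ and $b=1$ separately. In each case we first identify the element $\beta\in T(Q)$ attached to $\alpha$ by \cref{def:agreement set}. We then check that $(g^i g^\alpha,\ \alpha+i+b)$ coincides with $(g^\beta,\beta)$ as a point of $\F_p^2$.

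\emph{Case $b=0$.} Here $\alpha\in\{0,\dots,N-i-1\}$ and $\beta:=\alpha+i\in T(Q)$, so $Q(g^{\beta},\beta)=0$. By \cref{def:conjugates},
\[
Q_{i,0}(g^\alpha,\alpha)=Q(g^{i}g^{\alpha},\alpha+i)=Q(g^{\alpha+i},\alpha+i)=Q(g^\beta,\beta)=0 .
\]

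\emph{Case $b=1$.} Here $\alpha\in\{N-i,\dots,N-1\}$ and $\beta:=\alpha+i-N\in T(Q)$, so $Q(g^\beta,\beta)=0$.

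For the first coordinate, $g^{i}g^{\alpha}=g^{\alpha+i}=g^{\beta}g^{N}=g^{\beta}$, since $g^{p-1}=1$ by Fermat's little theorem.

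For the second coordinate, the integer $\alpha+i+1$ is read in $\F_p$. Since $N=p-1$, we have $\alpha+i+1=\beta+N+1=\beta+p\equiv\beta\pmod p$. This is exactly the remark after \cref{def:agreement set}. Hence
\[
Q_{i,1}(g^\alpha,\alpha)=Q(g^{i}g^{\alpha},\alpha+i+1)=Q(g^\beta,\beta)=0 .
\]

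There is no substantive obstacle. The only point needing care is bookkeeping between the two moduli. Exponents of $g$ live in $\mathbb{Z}/N\mathbb{Z}$, while the $Y$-coordinate lives in $\F_p=\mathbb{Z}/p\mathbb{Z}$. The shift $+b$ in the conjugate compensates exactly for this discrepancy when the index wraps around past $N$: subtracting $N$ in the exponent corresponds to adding $1$ in $\F_p$.
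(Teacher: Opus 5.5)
Your proof is correct and is the paper's argument. The paper handles both cases at once by setting $\beta=\alpha+i-bN$ and using $g^N=1$ together with $-N\equiv 1 \pmod p$; your case split on $b$ is the same computation written out separately.
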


\begin{proof}
    Let $\alpha \in T(Q)_{i,b}$, then by definition $\alpha + i - b\cdot N \in T(Q)$. Evaluating $Q_{i, b}$ at $(g^{\alpha},\alpha)$, we get 
    \begin{align*}
        Q_{i,b}(g^\alpha,\alpha) &= Q(g^i\cdot g^{\alpha}, \alpha+i +b)\\
        &= Q(g^{\alpha +i - b\cdot N}, \alpha+i - b\cdot N) \qquad &\text{(Since $g^N =1$ and $N = -1 \text{ mod } p$})\\
        &=0 \qquad &\text{(Since $\alpha +i - b\cdot N \in T(Q)$)}
    \end{align*}
    which finishes the proof.
\end{proof}


\begin{lemma}[Distinctness of conjugates] \label{lem:conjugate-distinctness}
    Let $Q(X,Y) \in \F_p[X,Y] $ be a non-zero polynomial of degree less than $p$ such that $Q$ depends on both $X$ and $Y$. For any $(i,b),(i',b) \in \{0,..,N-1\}\times \{0,1\}$ where $i \neq i'$, we have $Q_{i,b}(X,Y) \neq \lambda Q_{i',b}(X,Y)$ for any $\lambda \in \F_p.$
\end{lemma}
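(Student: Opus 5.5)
Suppose $Q_{i,b} = \lambda Q_{i',b}$ with $i\neq i'$. Since $Q\neq 0$, the conjugates are nonzero, so $\lambda\neq 0$. Set $j = i'-i \not\equiv 0 \pmod{N}$ and apply the invertible substitution $X\mapsto g^{-i}X$, $Y\mapsto Y-i-b$ to both sides. This reduces the relation to
$$Q(X,Y) = \lambda\, Q(g^{j}X,\, Y+j),$$
where $g^j\neq 1$ because $g$ has order exactly $N$ and $0<|j|<N$. The shift $j$ in the second argument is read modulo $p$. Since $N=p-1$ and $|j|<N$, we have $j\not\equiv 0 \pmod p$ unless $|j|=p$, which is impossible. (The case $j \equiv -1 \pmod p$ is allowed and causes no trouble.) So we have a nontrivial scaling $c=g^j\neq 1$ in $X$ together with a nonzero translation $s=j\in\F_p^\times$ in $Y$.

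Write $Q=\sum_{k} X^k q_k(Y)$ and compare coefficients of $X^k$:
$$q_k(Y) = \lambda c^{k} q_k(Y+s)\quad\text{for all } k.$$
Since $Q$ depends on $Y$, some $q_k$ is nonconstant; fix such a $k$ and let $e=\deg q_k$, with $1\le e<p$ because $\deg Q<p$. Comparing leading coefficients gives $\lambda c^k=1$. Then $q_k(Y)=q_k(Y+s)$. Comparing the $Y^{e-1}$ coefficients gives $e s\,\mathrm{lc}(q_k)=0$, which is impossible since $1\le e<p$ and $s\neq0$. This is the key point where the hypothesis $\deg Q<p$ enters: otherwise a polynomial such as $Y^p-Y$ would be translation invariant. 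We conclude that no $q_k$ is nonconstant, i.e. $Q$ does not depend on $Y$. This contradicts the hypothesis, and the lemma follows.

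In this route the $X$-dependence hypothesis is not even needed. A symmetric argument on $X$ would instead use that distinct $k$ with $q_k\neq0$ force $c^{k-k'}=1$, which is where the $X$-dependence and $\deg<p$ would matter. The main obstacle is purely bookkeeping: making sure the shift $j$ is nonzero mod $p$ (not just mod $N$) and that the $Y$-degree stays below $p$. Both follow directly from $|j|<N<p$ and $\deg Q<p$.
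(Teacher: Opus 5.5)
Your proof is correct and follows essentially the same route as the paper: expand $Q$ in powers of $X$, pick a nonconstant coefficient $q_k(Y)$, compare leading coefficients to pin down $\lambda$, and conclude that $q_k$ is invariant under a nonzero translation, which is impossible since $\deg q_k < p$. Your normalization to $Q(X,Y)=\lambda Q(g^jX,Y+j)$ is only cosmetic; your $Y^{e-1}$-coefficient computation and your check that $i'-i\not\equiv 0 \pmod p$ spell out steps the paper simply asserts, and your remark that only $Y$-dependence is used applies equally to the paper's argument.
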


\begin{proof}
    Let $Q(X,Y)$ be of the form
    \begin{equation*}
        Q(X,Y) = \sum_{i=0}^{p-1} Q^{(i)}(Y) X^i
    \end{equation*}
    where $Q^{(i)} \in \F_p[Y].$ Since $Q$ depends on both $X$ and $Y$, there exists some $k <p$ such that $Q^{(k)}(Y) \notin \F_p$. For the sake of contradiction suppose there exists $\lambda \in \F_p$ such that $Q_{(i,b)}(X,Y) = \lambda \cdot Q_{(i',b)}(X,Y)$. Then it follows that
    \begin{equation*}
        g^{ik} \cdot Q^{(k)}(Y+i+b)\cdot X^k = \lambda \cdot g^{i'k} \cdot  Q^{(k)}(Y+i' +b)\cdot X^{k}
    \end{equation*}
    by comparing monomials in $X$ of degree $k$. By comparing coefficients of the largest degree monomial in the above expression we get that $\lambda = g^{(i-i')k}$ and $Q^{(k)}(Y+i+b) = Q^{(k)}(Y+i'+b)$. However since $i \neq i'$, it follows that $Q^{(k)}(Y) = Q^{(k)}(Y+i'-i)$. Since $\deg(Q^{(k)}(Y)) <p$, this is only possible if $Q^{(k)}(Y) \in \F_p$. This contradicts our choice of $k$ and thus we are done.
\end{proof}

\subsection{Structure of agreement set $T(Q)$}

Here we will state the main observation about the geometry of $T(Q)$ when $Q$ is an irreducible polynomial. In particular we show that $T(Q)$ cannot contain long arithmetic progressions.

\begin{lemma}\label{lem:no-large-AP}
    Let $Q(X,Y) \in \F_p[X,Y]$ be an irreducible non-zero polynomial of degree $D$. Then $T(Q)$ cannot contain an arithmetic progression (over $\mathbb{Z}$) of size $D^2+2.$
\end{lemma}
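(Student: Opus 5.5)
The plan is to use the fact that shifting along an arithmetic progression in $T(Q)$ is exactly the operation performed by a conjugate. So $Q$ and one of its conjugates will share more than $D^2$ common zeros. Then \cref{thm:Bezout} forces them to be proportional, which contradicts \cref{lem:conjugate-distinctness}.

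\emph{Degenerate cases.} If $D \geq p$, then $D^2+2 > N$, so no such progression fits inside $\{0,\dots,N-1\}$ and there is nothing to prove. So assume $1 \le D < p$; the case $D=0$ is trivial, since then $T(Q)=\emptyset$.

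If $Q$ depends only on $X$, then $Q$ has at most $D$ roots $x\in\F_p$. Each root determines at most one $\alpha\in\{0,\dots,N-1\}$ with $g^\alpha=x$, so $|T(Q)|\le D<D^2+2$.

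If $Q$ depends only on $Y$, then $Q$ has at most $D$ roots $y \in \F_p$. The map $\alpha \mapsto \alpha \bmod p$ is injective on $\{0,\dots,N-1\}$, so again $|T(Q)|\le D$.

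Hence we may assume $Q$ depends on both variables and $\deg Q<p$.

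\emph{Main step.} Suppose $\alpha_j = a + jd$ for $j = 0,\dots,D^2+1$ all lie in $T(Q)$, with $d \ge 1$. Then $d\le N-1$, since the progression sits inside $\{0,\dots,N-1\}$. Consider the conjugate $Q_{d,0}(X,Y)=Q(g^dX,Y+d)$. For $j=0,\dots,D^2$, the point $P_j=(g^{\alpha_j},\alpha_j)$ is a zero of $Q$. It is also a zero of $Q_{d,0}$, because
\[
Q_{d,0}(P_j)=Q(g^{\alpha_{j+1}},\alpha_{j+1})=0 .
\]
Here $\alpha_{j+1}=\alpha_j+d$ holds as integers, hence also in $\F_p$, and $\alpha_{j+1}\in T(Q)$. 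The points $P_j$ are pairwise distinct, since their first coordinates $g^{\alpha_j}$ are distinct: the exponents are distinct elements of $\{0,\dots,N-1\}$ and $g$ has order $N$. So $Q$ and $Q_{d,0}$, both of degree $D$, have at least $D^2+1$ common zeros.

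By \cref{thm:Bezout}, $Q$ and $Q_{d,0}$ are therefore not coprime. Since $Q$ is irreducible and $\deg Q_{d,0}=\deg Q$, this forces $Q_{d,0}=\lambda Q$ for some $\lambda\in\F_p$. Noting that $Q=Q_{0,0}$, this gives $Q_{d,0} = \lambda Q_{0,0}$ with $d\neq 0$. That contradicts \cref{lem:conjugate-distinctness}, applied with $(i,b)=(d,0)$ and $(i',b)=(0,0)$.

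The main care point is the bookkeeping: checking that the shift by $d$ stays inside the integer range, so that no wrap-around term $b=1$ is needed, and that the hypotheses of \cref{lem:conjugate-distinctness} (degree $<p$, dependence on both variables) hold. The degenerate cases above handle the latter.
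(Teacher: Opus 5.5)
Your proof is correct and is essentially the paper's argument: you use the first $D^2+1$ terms of the progression to get $D^2+1$ common zeros of $Q$ and the conjugate $Q_{d,0}$, then combine \Bezout's theorem with \cref{lem:conjugate-distinctness}. Your explicit handling of the degenerate cases ($D \ge p$, or $Q$ depending on only one variable) checks hypotheses of \cref{lem:conjugate-distinctness} that the paper's proof uses without verifying, so your version is slightly more careful.
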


\begin{proof}
    For the sake of contradiction assume $T(Q)$ has an arithmetic progression of size $D^2+2$. Let $\alpha$ be the starting point of the arithmetic progression and $\beta \neq 0$ be the common difference. Then,
    \begin{equation*}
        \{\alpha, \alpha + \beta,....,\alpha + (D^2+1)\cdot \beta\} \subset T(Q).
    \end{equation*}
    Since $T(Q) \subset \{0,..,N-1\}$ and the arithmetic progression is over the integers,
    \begin{equation*}
        \{\alpha , \alpha  + \beta ,...,\alpha + D^2 \cdot \beta\} \subset \{0,..,N-1-\beta\}.
    \end{equation*}
    By definition of $T(Q)_{\beta,0}$, it follows that 
    \begin{equation*}
        \{\alpha , \alpha  + \beta ,...,\alpha + D^2 \cdot \beta\} \subset T(Q)_{\beta,0}
    \end{equation*}
    and thus by \cref{prop:conjugates-and-S(Q)}, $\alpha + i\cdot \beta$ is a root of $Q_{\beta.0}(X,Y)$ for $0 \leq i \leq D^2$. Hence $Q(X,Y)$ and $Q_{\beta,0}(X,Y)$ have $D^2 +1$ common roots.

Recall that $Q(X,Y)$ is irreducible. By \cref{lem:conjugate-distinctness}, we know that $Q(X,Y) \neq \lambda \cdot  Q_{\beta,0}(X,Y)$ for any $\lambda \in \F_p$.
Thus $\gcd(Q,Q_{\beta,0}) =1$. By Bezout's theorem (\cref{thm:Bezout}), we get that $Q$ and $Q_{\beta,0}$ can have at most $D^2$ common roots, which is a contradiction.
\end{proof}

\begin{corollary}\label{cor:S(Q)-approximation-enhanced}
    Let $Q(X,Y) \in \F_p[X,Y]$ be an irreducible non-zero polynomial of degree at most $D$, then there exists a constant $c_D < N$ such that for any $c_D \leq i \leq N-1$, $|T(Q)_{i,0}| \geq |T(Q)| -   i\cdot(\log\log i)^{-2^{-2^{D^2 +11}}}. $
\end{corollary}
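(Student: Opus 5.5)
We use the identity $|T(Q)| = |T(Q)_{i,0}| + |T(Q)_{i,1}|$ from \cref{def:agreement set}. It therefore suffices to bound $|T(Q)_{i,1}|$ from above by $i\cdot(\log\log i)^{-2^{-2^{D^2+11}}}$.

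By definition, $T(Q)_{i,1} = \{\alpha \in \{N-i,\dots,N-1\} : \alpha+i-N \in T(Q)\}$. The map $\alpha \mapsto \alpha + i - N$ is a bijection from $T(Q)_{i,1}$ onto $T(Q)\cap\{0,\dots,i-1\}$. So $|T(Q)_{i,1}| = |T(Q)\cap \{0,1,\dots,i-1\}|$, which counts the elements of $T(Q)$ in an initial interval of length $i$.

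Next we apply \cref{lem:no-large-AP}. Since $Q$ is irreducible of degree at most $D$, the set $T(Q)$ contains no arithmetic progression over $\mathbb{Z}$ of length $D^2+2$. The same holds for the subset $S = T(Q)\cap\{0,\dots,i-1\}$. After translating by $1$, $S$ sits inside $[i]$, and translation preserves arithmetic progressions.

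Now apply \cref{thm:Szemeredi-Theorem-Gowers} with $k = D^2+2$ and $n = i$. For $i \geq n_{D^2+2}$, any subset of $[i]$ of size at least $i(\log\log i)^{-2^{-2^{D^2+11}}}$ contains a $(D^2+2)$-term progression. Hence $|S| < i(\log\log i)^{-2^{-2^{D^2+11}}}$. The exponent matches: $2^{k+9} = 2^{D^2+11}$.

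Set $c_D = \max(n_{D^2+2}, 3)$ so that $\log\log i$ is positive and well defined. Combining the steps gives $|T(Q)_{i,0}| = |T(Q)| - |S| \geq |T(Q)| - i(\log\log i)^{-2^{-2^{D^2+11}}}$ for all $c_D \le i \le N-1$.

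The only delicate point is the requirement $c_D < N$. If $N \le n_{D^2+2}$, the range of $i$ is empty and the statement is vacuous. Otherwise $c_D$ is a constant depending only on $D$, and $c_D < N$ holds for $p$ large. Beyond this, I expect no real obstacle: the bijection and the translation-invariance of arithmetic progressions are routine checks.
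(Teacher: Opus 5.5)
Your proof is correct and follows the paper's argument. Both use the split $|T(Q)|=|T(Q)_{i,0}|+|T(Q)_{i,1}|$, show that $T(Q)_{i,1}$ has no $(D^2+2)$-term progression, and then bound its size by Gowers' theorem with $k=D^2+2$. The paper gets the AP-freeness from $T(Q)_{i,1}\subset T(Q_{i,1})$ and \cref{lem:no-large-AP} applied to the conjugate $Q_{i,1}$. You instead note that $T(Q)_{i,1}$ is a translate of $T(Q)\cap\{0,\dots,i-1\}$ and apply the lemma to $Q$ itself, which saves checking that the conjugate is still irreducible of degree at most $D$.
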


\begin{proof}
    Since $Q(X,Y)$ is an irreducible and non-zero polynomial of degree at most $D$, $Q_{i,1}(X,Y) = Q(g^i X, Y +i+1)$ is also an irreducible polynomial of degree at most $D$. Since $T(Q)_{i,1} \subset T(Q_{i,1})$ (\cref{prop:conjugates-and-S(Q)}), then by \cref{lem:no-large-AP}, $T(Q)_{i,1}$ does not contain any arithmetic progression of size $D^2+2$. Since $T(Q)_{i,1} \subset \{N-i,...,.N-1\}$, by \cref{thm:Szemeredi-Theorem-Gowers}, 
    \begin{equation*}
        |T(Q)_{i,1}| < i \cdot (\log \log i)^{-2^{-2^{D^2+11}}}
    \end{equation*}
    Further note that $|T(Q)| = |T(Q)_{i,0}| + |T(Q)_{i,1}|$ which implies
    \begin{equation*}
        |T(Q)_{i,0}| \geq |T(Q)| -   i\cdot(\log\log i)^{-2^{-2^{D^2 +11}}}.
    \end{equation*}
    which is the required relation. 
\end{proof}

\section{Approximation by univariate polynomials} \label{sec:univariate}

In this section, we consider low-degree univariate polynomials $f(X) \in \F_p[X]$. We are interested in the number of points $\alpha$ for which we can have $f(g^\alpha) = \alpha$, that is, $f$ agrees with $\dlog$ on $g^\alpha$.

\begin{theorem}\label{thm:univariate-approximation}
    For any $\epsilon >0$, there exists $n_{\epsilon}$ such that for any $N \geq n_{\epsilon}$, the following is true. Let $f\in \F_p[X]$ be a non-zero polynomial of degree $D$. Define $Q(X,Y)\in \F_p[X,Y]$ as $Q(X,Y) := Y - f(X)$. Then,
    \begin{equation*}
        |T(Q)|  \leq \sqrt{DN }\left( 1 + \epsilon + \cO(\log \log N)^{-C_{\epsilon}/2} \right)
    \end{equation*}
    where $C_{\epsilon} = 2^{-2^{\left(\frac{81}{4\epsilon^4}+11\right)}}$.
\end{theorem}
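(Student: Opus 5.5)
The plan is to apply the Johnson bound (\cref{lem:Johnson-Bound}) on the universe $[n]=\{0,\dots,N-1\}$. The list consists of the agreement sets of the conjugates $Q_{i,0}(X,Y)=Y+i-f(g^iX)$ for $i$ ranging over a short window $i\in\{c_D',\dots,c_D'+L-1\}$. Here $L$ is a large constant depending only on $\epsilon$, roughly $L\approx 1/\epsilon^2$ up to constants, chosen so that $n/(2L)$ is negligible against $\sqrt{DN}$. The degree $D$ itself may be huge, even close to $N$. So we cannot apply \cref{cor:S(Q)-approximation-enhanced} with a $D$-dependent exponent. Instead we exploit that $Q=Y-f(X)$ is linear in $Y$.

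The first step is to bound the arithmetic progressions in $T(Q)$ without Bézout. Suppose $\alpha,\alpha+\beta,\dots,\alpha+k\beta\in T(Q)$, all inside $\{0,\dots,N-1\}$. Then $f(g^{\alpha+j\beta})=\alpha+j\beta$. The univariate polynomial $h(X)=f(g^\beta X)-f(X)-\beta$ therefore vanishes at $g^{\alpha},\dots,g^{\alpha+(k-1)\beta}$. If $h\neq 0$, this gives $k\le D$, which is too weak. So instead we use iterated differences. Taking $k$ as large as $D+2$ does not help for large $D$ either.

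I therefore expect the right parameter to be a constant $k=k(\epsilon)$ (the $\frac{81}{4\epsilon^4}$ in $C_\epsilon$ suggests $k\approx 81/(4\epsilon^4)$ with the $D^2+2$ form replaced). The intended use is only on the window $\{N-i,\dots,N-1\}$ with $i<L$ small. There $T(Q)_{i,1}$ has at most $i\le L$ elements, so trivially $|T(Q)_{i,0}|\ge |T(Q)|-L$. That loss is negligible, and Szemerédi is not even needed.

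Concretely, for $0\le i<j<L$ let $T_i=T(Q)_{i,0}$. Each $T_i$ is contained in the agreement set of the degree-$D$ polynomial $f(g^iX)-i$ with $\dlog$ on $\{0,\dots,N-1-i\}$ (\cref{prop:conjugates-and-S(Q)}). For $i\ne j$, the polynomials $f(g^iX)-i$ and $f(g^jX)-j$ are distinct by \cref{lem:conjugate-distinctness}, since $Q$ depends on $Y$ and on $X$ when $D\ge1$. They can therefore agree on at most $D$ points, giving $|T_i\cap T_j|\le D$. The sizes satisfy $|T_i|\ge |T(Q)|-L$. The Johnson bound with $r=D$ then yields
\begin{equation*}
|T(Q)|-L\le \frac{N}{2L}+\sqrt{\frac{N^2}{4L^2}+DN}.
\end{equation*}

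The main obstacle is that this only gives $(1+\epsilon)\sqrt{DN}$ when $N/L\ll\epsilon\sqrt{DN}$, i.e.\ $L\gg \sqrt{N/D}/\epsilon$. Thus $L$ is not constant when $D$ is small. With $L$ that large, the wraparound sets $T(Q)_{i,1}\subset\{N-i,\dots,N-1\}$ can no longer be ignored, and this is exactly where \cref{cor:S(Q)-approximation-enhanced} and Gowers' theorem enter.

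So the real plan is a case split. When $D\ge \epsilon^{-4}$-ish relative to $N$, that is, when $\sqrt{N/D}$ is small enough, the trivial loss $L$ suffices. Otherwise $D$ is bounded by a function of $\epsilon$, namely $D\le 81/(4\epsilon^4)$ matching $D^2+2$ in the exponent. In that case we take $L=\lceil\sqrt{N/D}/\epsilon\rceil$, and \cref{cor:S(Q)-approximation-enhanced} bounds the loss by $L(\log\log L)^{-C_\epsilon}$. Bounding $D$ by a function of $\epsilon$ in this case is what lets us use $D^2+11\le 81/(4\epsilon^4)+11$ uniformly. Substituting into the Johnson bound and simplifying $\sqrt{x^2+DN}+x\le\sqrt{DN}(1+x/\sqrt{DN})+x$ then gives $\sqrt{DN}(1+\epsilon+\cO(\log\log N)^{-C_\epsilon/2})$. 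The delicate part will be tuning the threshold between the cases and the choice of $L$ so that both error terms land within the stated form.
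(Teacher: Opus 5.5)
Your final plan is essentially the paper's proof. It applies the Johnson bound to the conjugates $Q_{i,0}$ for $i$ up to $m\approx\sqrt{N}$ (up to $\epsilon$ or $\log\log$ factors), with pairwise intersections at most $D$, and splits on the size of $D$: for large $D$ it uses the trivial wraparound loss $|T(Q)_{i,1}|\le i$, and for small $D$ it uses \cref{cor:S(Q)-approximation-enhanced} (via Gowers) to make that loss $o(m)$.

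The one correction is the threshold between the two cases. The large-$D$ case already closes once $D\ge d_\epsilon=\lceil 9/(2\epsilon^2)\rceil$: for example, $m=\sqrt{N/2}$ gives $|T(Q)|\le\sqrt{DN}\,(1+3/\sqrt{2D})$. That threshold is exactly what makes $D^2<81/(4\epsilon^4)$ in the small-$D$ case. Your stated threshold $D\approx 81/(4\epsilon^4)$ (or ``$\epsilon^{-4}$-ish'') would only give the exponent $2^{-2^{(81/(4\epsilon^4))^2+11}}$, which is weaker than the stated $C_\epsilon$.
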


\begin{proof}
    Let $A = |T(Q)|$ for ease of notation. Let $m$ be a parameter which we will set later and set $d_{\epsilon} = \left\lceil\frac{9}{2\epsilon^2}\right\rceil$. Define a list $\cL_{m} \subset \F_p[X,Y]$ of polynomials as $\cL_{m} = \{ \hat{Q}_{i,0}(X,Y) \ | \ 1 \leq i \leq m \}$. For any $1 \leq i \leq m$, define $T_i = T(Q_{i,0}).$ 
    
    Our goal is to invoke the Johnson bound (\cref{lem:Johnson-Bound}) on the sets $\{T_i \}$. We have to show the hypotheses for it are satisfied. The proof will proceed based on two cases.

    \begin{itemize}
        \item \textbf{Case $D \geq d_{\epsilon}$ : } Note that $|T_i| \geq |T(Q)_{i,0}| = |T(Q)| - |T({Q})_{i,1}|.$ Since $T(Q)_{i,1} \subset \{N-i,...,N-1\} $, this implies $|T(Q)_{i,1}|\leq i$ which in turn implies
        \begin{equation*}
            |T_i| \geq A - i.
        \end{equation*}
        Hence for every $1\leq i \leq m$, we have 
        \begin{equation*}
            |T_i| \geq A - m.
        \end{equation*}
        Next we will show that for all $i \neq j$, $|T_i \cap T_j| \leq D$. For $1 \leq i \leq m$, define $f_i(X) = f(g^i\cdot X) - i$. Then for any $1 \leq i \leq m$,
        \begin{equation*}
            Q_{i,0}(X,Y) = Y - f_i(X).
        \end{equation*}
        By \cref{lem:conjugate-distinctness}, for any $i \neq j$, $f_i(X) \neq f_j(X)$. If $\alpha \in T_i \cap T_j$, then $f_i(g^\alpha) = f_j(g^\alpha) = \alpha$ and hence $g^\alpha$ is a root of $f_i(X) - f_j(X)$. Since $f_i(X) - f_j(X)$ is a non-zero polynomial of degree at most $D$, $f_i(X) - f_j(X)$ can have at most $D$ roots. Since for any $\alpha \ne \beta \in \{0,...,N-1\}, g^\alpha \neq g^\beta$, we get $|T_i \cap T_j| \leq D$.

        Now we can apply the Johnson bound (\cref{lem:Johnson-Bound}) on $T_1,...,T_m$, and get that
        \begin{equation*}
            A-m \leq \frac{N}{2m} + \sqrt{\frac{N^2}{4m^2} + ND}.
        \end{equation*}
        Setting $m = \sqrt{N/2}$, we get
        \begin{align*}
            A &\leq \sqrt{2N} +\sqrt{\frac{N}{2}+DN}\\
            &= \sqrt{DN}\cdot \left( \sqrt{1 + \frac{1}{2D}}+\sqrt{\frac{2}{D}} \right)\\
            &\leq \sqrt{DN}\cdot \left( 1 + \sqrt{\frac{1}{2D}} + \sqrt{\frac{2}{D}} \right) \qquad \left(\text{Since $\left(x^2+y^2\right) \leq \left(|x|+|y|\right)^2$}\right)\\
            &= \sqrt{DN} \cdot \left( 1 + \frac{3}{\sqrt{2D}} \right)\\
            &\leq \sqrt{DN}\cdot (1 + \epsilon) \qquad (\text{Since $D \geq d_{\epsilon}$}).
        \end{align*}
        Hence we have shown that 
        \begin{equation*}
            |T(Q)| \leq \sqrt{DN}\cdot \left( 1 + \epsilon +\cO(\log \log N)^{-C_{\epsilon}/2} \right)
        \end{equation*}
        which finishes this case.
        
        \item \textbf{Case $D < d_{\epsilon}$ : } We will use $D \leq d_{\epsilon},$ to get a better bound on $|T_i|$ than the previous case. Just like the previous case we know that $|T_i| \geq |T(Q_{i,0})|$. By \cref{cor:S(Q)-approximation-enhanced}, it follows that
        \begin{equation*}
            |T_i| \geq A - i\cdot(\log\log i)^{-2^{-2^{\left(\frac{81}{4\epsilon^4}+11\right)}}}.
        \end{equation*}
        Thus for all $1\leq i \leq m$, it follows that 
        \begin{equation*}
            |T_i| \geq A - m\cdot(\log \log m)^{-C_{\epsilon}} 
        \end{equation*}
        where $C_\epsilon = 2^{-2^{\left(\frac{81}{4\epsilon^4}+11\right)}}.$

        For any $i\neq j$, just like the previous case, it follows that $|T_i \cap T_j|\leq D$. Now we apply the Johnson Bound (\cref{lem:Johnson-Bound}) on $T_1,...,T_m$ and get that
        \begin{equation*}
            A - m\cdot (\log \log m)^{-C_{\epsilon}} \leq \frac{N}{2m} +\sqrt{\frac{N^2}{4m^2} + DN}.
        \end{equation*}
        Rearranging the terms we get 
        \begin{align*}
            A &\leq \sqrt{DN}\cdot \left(  \sqrt{1 + \frac{N}{4m^2D}}+\frac{\sqrt{N}}{2m\sqrt{D}} + \frac{m\cdot (\log \log m)^{-C_{\epsilon}}}{\sqrt{DN}}\right)\\
            &\leq \sqrt{DN}\cdot \left( 1 + \frac{\sqrt{N}}{m\sqrt{D}} + \frac{m\cdot (\log\log m)^{-C_{\epsilon}}}{\sqrt{DN}}\right) \qquad \left(\text{Since $\left(x^2+y^2\right) \leq \left(|x|+|y|\right)^2$ }\right).
        \end{align*}
        Setting $m = \sqrt{N}\cdot \left(\log \log N\right)^{C_{\epsilon}/2}$, we get 
        \begin{align*}
            A &\leq \sqrt{DN}\cdot \left( 1 + \frac{1}{\sqrt{D}\cdot (\log\log N)^{C_{\epsilon}/2}} + \frac{(\log\log N)^{C_{\epsilon}/2}\cdot \left(  \log \log \left(\sqrt{N}\cdot (\log \log N )^{C_{\epsilon}/2}\right) \right)^{-C_{\epsilon}}}{\sqrt{D}}\right)\\
            &\leq \sqrt{DN}\cdot \left( 1 + \cO(\log\log N)^{-C_{\epsilon}/2} \right)  \qquad \text{(Since $d \geq 1$ and $N$ is large enough)}
        \end{align*}
        Hence we have shown that
        \begin{equation*}
            |T(Q)| \leq \sqrt{DN}\cdot \left(1 + \epsilon + \cO(\log \log N)^{-C_{\epsilon}/2} \right)
        \end{equation*}
        which finishes this case.
    \end{itemize}
\end{proof}

\begin{remark}
    The lower-order term $\sqrt{DN}\cdot (\log\log N)^{-C_{\epsilon}/2}$ may be improved using stronger quantitative estimates for Szem\'eredi's Theorem, for instance \cite{GT17} and \cite{leng2024improved}. For ease of presentation we do not invoke those in this paper.
\end{remark}


\section{Approximation by bivariate polynomials} \label{sec:bivariate}

In this section, we will be interested in the number of locations where a bivariate polynomial $Q(X, Y)$ may satisfy $Q(g^\alpha, \alpha) = 0$. Recall from \cref{def:agreement set} that this set is denoted $T(Q)$.

\begin{theorem}\label{thm:bivariate-approximation}
    For any $\epsilon >0$, there exists $n_{\epsilon}$ such that for any $N \geq n_{\epsilon}$, if $Q(X,Y) \in \F_p[X,Y]$ is a non zero polynomial of degree $D$ then,
    \begin{equation*}
        |T(Q)| \leq D\sqrt{N} \left( 1  +\epsilon + \cO( \log\log N)^{-C_{\epsilon}/2} \right).
    \end{equation*}
    where $C_\epsilon = 2^ { -2^{\left(\frac{9}{2\epsilon^2}+11\right)} }.$
\end{theorem}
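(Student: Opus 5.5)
First I reduce to the case where $Q$ is irreducible. Factor $Q = \prod_j P_j$ into irreducibles with $\deg P_j = D_j$ and $\sum_j D_j \le D$. Then $T(Q) \subseteq \bigcup_j T(P_j)$. A bound of the form $|T(P_j)| \le D_j\sqrt{N}(1+\epsilon+\text{err})$ for each factor therefore sums to the claimed bound, because the bound is linear in the degree.

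A few degenerate factors need separate handling.
\begin{itemize}
\item If $P_j$ depends only on $Y$, it has at most $D_j$ roots in $Y$, so $|T(P_j)| \le D_j$.
\item If $P_j$ depends only on $X$, it has at most $D_j$ roots $x$, each giving one $\alpha$, so again $|T(P_j)| \le D_j$.
\end{itemize}
So from now on I assume $Q$ is irreducible of degree $D$ (renaming $D_j$ as $D$), depends on both variables, and has degree less than $p$. The last condition is harmless: if $D \ge p$, then $D\sqrt N > N$ and the bound is trivial.

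Next I take the list of conjugates $Q_{i,0}$ for $1 \le i \le m$ and set $T_i = T(Q_{i,0})$.
\begin{itemize}
\item \textbf{Pairwise intersections.} By \cref{lem:conjugate-distinctness}, $Q_{i,0}$ and $Q_{j,0}$ are not scalar multiples. Both are irreducible, being linear changes of variables of $Q$, so they are coprime. \Bezout's theorem (\cref{thm:Bezout}) then gives at most $D^2$ common zeros. Distinct $\alpha$ give distinct points $(g^\alpha,\alpha)$, so $|T_i\cap T_j| \le D^2$.
\item \textbf{Sizes.} By \cref{prop:conjugates-and-S(Q)}, $T_i \supseteq T(Q)_{i,0}$, so $|T_i| \ge A - |T(Q)_{i,1}|$, where $A = |T(Q)|$.
\end{itemize}
Applying \cref{lem:Johnson-Bound} with $r = D^2$ and $n = N$ gives
\[
A - \max_i |T(Q)_{i,1}| \le \frac{N}{2m} + \sqrt{\frac{N^2}{4m^2} + ND^2}.
\]

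I split into two cases on $D$, with threshold $d_\epsilon = \lceil 9/(2\epsilon^2)\rceil$, matching the constant $C_\epsilon$ in the statement.
\begin{itemize}
\item \textbf{Large degree, $D \ge d_\epsilon$.} I use the trivial bound $|T(Q)_{i,1}| \le i \le m$ and take $m = \sqrt{N/2}$. This gives
\[
A \le \sqrt{2N} + \sqrt{N/2 + D^2N} \le D\sqrt N\Bigl(1 + \tfrac{3}{\sqrt 2 D}\Bigr) \le D\sqrt N(1+\epsilon).
\]
Here the error term is $O(1/D)$, even better than in the univariate case.
\item \textbf{Small degree, $D < d_\epsilon$.} I invoke \cref{cor:S(Q)-approximation-enhanced}, which needs exactly the irreducibility established above. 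It gives $|T(Q)_{i,1}| \le m(\log\log m)^{-C}$ with $C = 2^{-2^{D^2+11}}$. Since $D^2 < d_\epsilon$ roughly, $C \ge C_\epsilon$ after adjusting constants. The exponent in the statement, $9/(2\epsilon^2)+11$, suggests the authors bound $D^2$ by roughly $d_\epsilon$; I may need to take the threshold on $D^2$ rather than on $D$, which is equally fine for the first case since the error there is $3/(\sqrt2 D)$. I then set $m = \sqrt N(\log\log N)^{C_\epsilon/2}$. The Johnson terms give $D\sqrt N\bigl(1 + O((\log\log N)^{-C_\epsilon/2})\bigr)$, and the shift loss $m(\log\log m)^{-C_\epsilon}$ is $O\bigl(\sqrt N(\log\log N)^{-C_\epsilon/2}\bigr)$, which is at most $D\sqrt N$ times the same error. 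The condition $c_D \le i$ in \cref{cor:S(Q)-approximation-enhanced} only affects $i$ below a constant; I handle it by using the maximum over $i \le m$ together with monotonicity of the density bound, or by trivially bounding the small $i$.
\end{itemize}

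The main obstacle is the reduction step and the bookkeeping of constants. The reduction must ensure that summing the per-factor bounds does not lose anything, which works because the factor degrees add up to at most $D$. It must also ensure that each factor meets the hypotheses of \cref{lem:conjugate-distinctness}: it depends on both variables and has degree less than $p$. The other delicate point is matching the Szemerédi exponent to the case threshold so that a single constant $C_\epsilon$ works uniformly for all small $D$. The case analysis itself is a direct adaptation of \cref{thm:univariate-approximation} with $r = D^2$ in place of $D$.
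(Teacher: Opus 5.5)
Your proposal is correct and follows essentially the same route as the paper. You reduce to irreducible factors (your union bound over factors is equivalent to the paper's averaging step choosing a dense factor), apply \cref{lem:Johnson-Bound} to the conjugate agreement sets $T(Q_{i,0})$, $1\le i\le m$, with $r=D^2$, and split on the degree: the trivial shift loss with $m=\sqrt{N/2}$ for large degree, and \cref{cor:S(Q)-approximation-enhanced} with $m=\sqrt N(\log\log N)^{C_\epsilon/2}$ for small degree.

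The fix you anticipated is the one the paper uses. Its threshold is $d_\epsilon=\lceil 3/(\sqrt2\,\epsilon)\rceil$, i.e.\ effectively comparing $D^2$ with $9/(2\epsilon^2)$, not the univariate threshold $\lceil 9/(2\epsilon^2)\rceil$ on $D$. Your separate handling of factors depending on only one variable, and of the range $i<c_D$, makes explicit details the paper passes over.
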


\begin{proof}
    Let $Q(X,Y) = Q^{(1)}(X,Y) \dotsb Q^{(\ell)}(X,Y)$ be the irreducible factorization of $Q(X,Y)$ over $\F_p[X,Y]$. (The polynomials $\{Q^{(i)}\}$ need not be distinct.) By an averaging argument, there exists $i_0 \in [\ell]$ such that 
    \begin{equation} \label{eqn:Dense_Factor}
        |T(Q^{(i_0)})|\geq |T(Q)|\frac{\deg (Q^{(i_0)})}{\deg(Q)}.
    \end{equation}
    
    For ease of notation let $\hat Q = Q^{(i_0)}$, $A = |T(\hat Q)|$ and $d = \deg(\hat Q)$.
    
    Let $m$ be a parameter which we will set later and set $d_{\epsilon} = \left\lceil \frac{3}{\sqrt{2}\epsilon} \right\rceil$. Define $\cL_{m} \subset \F_q[X,Y]$ where $\cL_{m} = \{\hat Q_{i,0}(X,Y) \mid 1 \leq i \leq m\}$. For any $1 \leq i \leq m$, define $T_i = T(Q_{i,0})$. The proof will move ahead based on two cases, similar to the univariate case. Once again, our goal is to prove the hypotheses for the Johnson bound are satisfied, and then invoke it.
    \begin{itemize}
        \item \textbf{Case $d \geq d_{\epsilon}$ : } Note that $|T_i| \geq |T(\hat{Q})_{i,0}| = |T(\hat Q)| - |T(\hat{Q})_{i,1}|.$ Since $T(\hat Q)_{i,1} \subset \{N-i,...,N-1\} $, this implies $|T(\hat Q)_{i,1}|\leq i$ which in turn implies
        \begin{equation*}
            |T_i| \geq A - i
        \end{equation*}
        Hence for all $1\leq i \leq m$, we have 
        \begin{equation*}
            |T_i| \geq A - m.
        \end{equation*}
        Next we will show that for all $i \neq j$, $|T_i \cap T_j| \leq d^2$. We know that $\hat Q$ is an irreducible polynomial of degree $d$ which implies that for any $1 \leq i \leq m$, $\hat Q_{i,0}$ is an irreducible polynomial of degree $d$. For any $i\neq j$, \cref{lem:conjugate-distinctness} shows that $\hat Q_{i,0} \neq \lambda\cdot \hat Q_{j,0}$ for any $\lambda \in \F_p$. This is equivalent to showing that $\gcd(\hat Q_{i,0}, \hat Q_{j,0})=1$. By definition of $T_i's$ it is clear that $T_i \cap T_j$ is the set of common zeros of $\hat Q_{i,0}$ and $\hat Q_{j,0}$. Thus by \cref{thm:Bezout}, it follows that $|T_i \cap T_j| \leq d^2.$

        Now we apply the Johnson Bound (\cref{lem:Johnson-Bound}) on $T_1,...,T_m$ and get that 
        \begin{equation*}
            A-m \leq \frac{N}{2m} + \sqrt{\frac{N^2}{4m^2} + Nd^2}.
        \end{equation*}
        Now we set $m = \sqrt{n/2}$ and get
        \begin{align*}
            A &\leq \sqrt{2N} + \sqrt{\frac{N}{2}+ Nd^2}\\
            &= d\sqrt{N}\cdot\left( \sqrt{1 + \frac{1}{2d^2}}+\sqrt{\frac{2}{d^2}} \right)\\
            &\leq d\sqrt{N} \cdot \left(1 +  \frac{1}{d \sqrt{2}} + \frac{\sqrt{2}}{d} \right) \qquad \left(\text{Since $\left(x^2+y^2\right) \leq \left(|x|+|y|\right)^2$}\right)\\
            &= d\sqrt{N}\cdot \left( 1 + \frac{3}{\sqrt{2}d} \right)\\
            &\leq d \sqrt{N} \cdot (1 + \epsilon) \qquad (\text{Since $d \geq d_{\epsilon}$}).
        \end{align*}
        \cref{eqn:Dense_Factor} would then imply
        \begin{equation*}
            |T(Q)| \leq D\sqrt{N}\cdot (1 + \epsilon).
        \end{equation*}
        Thus we are done with this case.

        \item \textbf{Case $d < d_{\epsilon}$ : } We will use $d \leq d_\epsilon$ to get a better bound on $|T_i|$ than the previous case. Just like the previous case we know that $|T_i| \geq |T(\hat Q)_{i,0}|$. By \cref{cor:S(Q)-approximation-enhanced}, it follows that
        \begin{equation*}
            |T_i| \geq A - i\cdot (\log \log i)^{-2^ {- 2^{\left(\frac{9}{2\epsilon^2}+11\right)} } }
        \end{equation*}
        Thus for all $1 \leq i \leq m$, it follows that
        \begin{equation*}
            |T_i| \geq A - m\cdot (\log \log m)^{-C_\epsilon}  
        \end{equation*}
        where $C_\epsilon = 2^ { -2^{\left(\frac{9}{2\epsilon^2}+11\right)} }.$
        
        For any $i \neq j$, $|T_i \cap T_j|\leq d^2$. The argument is identical to the one in the previous case. Now we apply the Johnson Bound (\cref{lem:Johnson-Bound}) on $T_1,...,T_m$ and get that
        \begin{equation*}
            A - m\cdot (\log \log m)^{-C_\epsilon } \leq \frac{N}{2m} + \sqrt{\frac{N^2}{4m^2} + N d^2}.
        \end{equation*}
        Rearranging the terms we get
        \begin{align*}
            A &\leq d\sqrt{N}\cdot \left( \sqrt{1+\frac{N}{4m^2d^2}}+\frac{\sqrt{N}}{2md} + \frac{m \cdot (\log \log m)^{-C_\epsilon}}{d \sqrt{N}}\right)\\
            &\leq d \sqrt{N}\cdot \left( 1 + \frac{\sqrt N}{md} + \frac{m\cdot (\log\log m)^{-C_{\epsilon}}}{d \sqrt{N}} \right) \qquad \left(\text{Since $\left(x^2+y^2\right) \leq \left(|x|+|y|\right)^2$}\right)
        \end{align*}
        Setting $m = \sqrt{N}\cdot \log\log(N)^{C_{\epsilon}/2}$ we get
        \begin{align*}
            A &\leq d\sqrt{N} \cdot \left( 1 + \frac{1}{d \cdot (\log \log N)^{C_{\epsilon}/2}} + \frac{(\log\log N)^{C_{\epsilon}/2} \cdot  \left( \log\log(N^{1/2}\cdot (\log\log N))^{C_{\epsilon}/2} \right)^{-C_{\epsilon}}}{d} \right)\\
            &\leq d \sqrt{N}\cdot \left( 1+ \cO (\log\log N)^{-C_{\epsilon}/2}\right) \qquad \text{(Since $d \geq 1$ and $N$ is large enough)}
        \end{align*}
        \cref{eqn:Dense_Factor} would then imply
        \begin{equation*}
            |T(Q)|\leq D\sqrt{N}\cdot \left( 1 + \cO (\log \log N)^{- C_{\epsilon}/2} \right).
        \end{equation*}
        Hence, we are done with this case as well.
    \end{itemize}
\end{proof}

\begin{remark}
    Similar to the univariate case, stronger estimates for Szemeredi's theorem improve the lower-order term $D\sqrt{N}\cdot (\log\log N)^{-C_{\epsilon}/2}$. 
\end{remark}

\section{Approximation by linear functions} \label{sec:linear}


In this section we prove bounds on approximating the discrete log by polynomials of degree $1$, namely linear functions.

\begin{theorem}\label{thm:linear-main}
    Let $f(X) \in \F_p[X]$ have degree $\leq 1$.
    Then:
    \[ |\{ x \in \F_p^* \mid  f(x) = \dlog(x)| \leq O(p^{2/5}).
    \]
 \end{theorem}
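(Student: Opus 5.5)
The plan is to encode the agreement set as incidences between an explicit set of about $2p$ points and a set of $p-1$ lines in $\F_p^2$, using the multiplicativity of $\dlog$, and then to apply \cref{thm:line-point-incidence-strong}.

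\textbf{Setup.} Write $f(X)=aX+b$ and $N=p-1$. If $a=0$, then $f$ is constant and agrees with $\dlog$ at most once, since $\dlog$ is injective on $\F_p^*$. So I will assume $a\neq 0$. Let
\[
S=\{\alpha\in\{0,\dots,N-1\} \mid a g^\alpha+b=\alpha \text{ in } \F_p\},
\qquad A=|S|.
\]
The number $A$ is exactly the quantity to bound.

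\textbf{Step 1: points and lines.} The key identity is $g^{\alpha+\beta}=g^\beta g^\alpha$ together with the relation $a g^\alpha=\alpha-b$, valid for $\alpha\in S$. For each $\beta\in\{0,\dots,N-1\}$, define the line
\[
\ell_\beta=\{(X,Y)\in\F_p^2 \mid aX=g^\beta(Y-\beta-b)\}.
\]
For $\alpha\in S$, consider the point $(g^{\alpha+\beta},\,\alpha+\beta \bmod p)$. We have $a g^{\alpha+\beta}=g^\beta(\alpha-b)$, and also $(\alpha+\beta)-\beta-b=\alpha-b$. Hence this point lies on $\ell_\beta$.

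Next, I check that all these points lie in a fixed small set. As integers, either $\dlog(g^{\alpha+\beta})=\alpha+\beta$ or $\dlog(g^{\alpha+\beta})=\alpha+\beta-N$. Since $N\equiv -1 \pmod p$, in $\F_p$ we get $\alpha+\beta\in\{\dlog(x),\dlog(x)-1\}$, where $x=g^{\alpha+\beta}$. So every such point lies in
\[
\cP=\{(x,\dlog(x)),\,(x,\dlog(x)-1) \mid x\in\F_p^*\},
\qquad |\cP|\le 2N.
\]

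\textbf{Step 2: counting incidences.} Take $\cL=\{\ell_\beta \mid 0\le\beta\le N-1\}$.
\begin{itemize}
\item The lines are pairwise distinct: the slope of $\ell_\beta$, viewed as $Y$ in terms of $X$, is $a g^{-\beta}$, and these slopes are distinct for distinct $\beta$.
\item For a fixed $\beta$, the points obtained from distinct $\alpha\in S$ have distinct $X$-coordinates $g^{\alpha+\beta}$, so they are distinct.
\end{itemize}
Hence $|\cI(\cP,\cL)|\ge NA$. We have $mn\le 2N^2\le p^3$, so \cref{thm:line-point-incidence-strong} applies and gives
\[
NA\le 3(2N^2)^{2/3}+3N.
\]
This yields $A=O(N^{1/3})=O(p^{1/3})$, which is stronger than the claimed $O(p^{2/5})$. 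The weaker exponent $2/5$ would also follow from an incidence bound with a worse exponent, of the sum-product type in the spirit of Bourgain's extractor argument.

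\textbf{Main obstacle.} The delicate point is Step 1: reconciling integer addition of exponents modulo $N$ with addition in $\F_p$. The wraparound $\alpha+\beta\ge N$ changes the second coordinate by exactly $+1$ in $\F_p$. This is why $\cP$ must contain two points per $x$ rather than one, and why the construction only loses a factor of $2$. I would double-check the following details:
\begin{itemize}
\item distinctness of the points on each line;
\item distinctness of the lines;
\item the hypothesis $mn\le p^3$ of \cref{thm:line-point-incidence-strong}, which holds comfortably here.
\end{itemize}
If the stronger $p^{1/3}$ conclusion raises doubt, I would re-verify each of these carefully. In any case the final statement of \cref{thm:linear-main} follows, since $O(p^{1/3})\subseteq O(p^{2/5})$.
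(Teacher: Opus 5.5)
Your argument is correct, and it takes a genuinely different route from the paper. In fact, using the same black-box incidence theorem (\cref{thm:line-point-incidence-strong}), it proves the stronger bound $O(p^{1/3})$.

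\textbf{The paper's route.} It follows Bourgain's extractor template. It first passes to a bucket $S_{i,3}$, on which $\dlog$ is an exact homomorphism up to a fixed shift. The points are $\cP = T^{\star 3}$, of size at most $p$ because $3\cdot S$ is collinear. The lines are the roughly $A^2/2$ lines $L_{a,b}$ indexed by $T^{\star 2}$, each carrying $A$ incidences. This gives $A^3 \lesssim (A^2 p)^{2/3}$, i.e.\ $A \lesssim p^{2/5}$.

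\textbf{Your route.} You exploit the symmetry $(X,Y)\mapsto(g^\beta X,\, Y+\beta)$ of the graph of $\dlog$. This is essentially the conjugate construction of \cref{def:conjugates}, applied to the line $Y=f(X)$. You absorb the wrap-around by doubling the point set, at the cost of a factor $2$, instead of by bucketing. This produces $N$ distinct lines, each carrying $A$ incidences against only $2N$ points. The paper has only about $A^2 \le p$ lines, and that smaller number of lines is exactly what costs it the exponent. Your count gives
\[
NA \le 3(2N^2)^{2/3} + 3N, \qquad\text{so}\qquad A = O(p^{1/3}).
\]

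\textbf{The details you flagged all hold.}
\begin{itemize}
\item The point $(g^{\alpha+\beta}, \alpha+\beta)$ lies on $\ell_\beta$, and it lies in $\cP$ because $N \equiv -1$ in $\F_p$.
\item The slopes $a g^{-\beta}$ are distinct, so the lines are distinct.
\item For fixed $\beta$, the points on $\ell_\beta$ are distinct.
\item $|\cP|\cdot|\cL| = 2N^2 \le p^3$.
\item The case $a=0$ is handled correctly.
\end{itemize}

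\textbf{Connection to the rest of the paper.} Feeding your configuration into the trivial Cauchy--Schwarz incidence bound recovers a $\sqrt{2N}$-type bound. So your proof is the natural bridge between the conjugates-plus-Johnson argument of \cref{sec:univariate} and the incidence argument of this section. For this statement, your route is both shorter and quantitatively stronger than the paper's sum-product construction. The closing remark in your proposal about deriving $2/5$ from a weaker incidence exponent is unnecessary.
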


The strategy, of proving properties of the discrete log function via the line-point incidence theorem over finite fields, comes from the analysis of Bourgain's two-source extractor~\cite{Bourgain} (made especially clear in Rao's exposition~\cite{Rao}).

\subsection{Proof Overview}
We first give an overview of the argument.

Let $S \subseteq \F_p^2$ denote the intersection of the graphs of $f(X)$ and $\dlog$, that is:
$$S= \{ (x, y) \in \F_p^2 \ : x \neq 0, y = \dlog(x), y = f(x)\}.$$ 
Thus $S$ is contained in a line in $\F_p^2$. Let $T\subseteq \F_p$ be the projection of $S$ to the first coordinate:
$$ T = \{ x \in \F_p^* \mid \dlog(x) = f(x) \}.$$
Now consider the iterated sumset $S+ S+ S$.

On the one hand, $S+S+S$ is contained in a line in $\F_p^2$: indeed, if $f(X) = aX+ b$, 
then every element $(x,y)$ of $S+S+S$ satisfies $y = ax+ 3b$. This means that $|S+S+S| \leq p$.

On the other hand 
$$S+S+S = \{  (x_1 + x_2 + x_3,   \dlog(x_1) + \dlog(x_2) + \dlog(x_3)) \mid x_1, x_2 , x_3 \in T \}.$$
Let us pretend for the moment that $\dlog$ satisfies the exact homomorphism identity $\dlog(u) + \dlog(v) = \dlog(uv)$.
Then $S+S+S = \{ x_1 + x_2 + x_3, \dlog(x_1 x_2 x_3) \mid x_1,x_2,x_3 \in T \}$, 
and thus $S+S+S$ has the same size as the set $\cP$ given by:
$$ \cP = \{ (x_1 + x_2 + x_3, x_1 x_2 x_3) \mid x_1, x_2, x_3 \in T \}.$$
We now define a set $\cL$ of lines in $\F_p^2$. 
For each $\{x_1, x_2 \} \subseteq T$, define the line $L_{x_1, x_2} = \{ x_1 + x_2 + z, x_1x_2z) \mid z \in \F_p\}$.
There are ${|T| \choose 2 }$ such lines, and each is incident to $|T|$ points of $\cP$.
Putting this incidence data about $\cL$ and $\cP$ into the line-point incidence theorem, along with our upper bound $|\cP| \leq p$, gives the desired upper bound on $|T|$.

This concludes the proof overview. 
The only gap in this is the wished-for exact homomorphic property of $\dlog$, which is not quite true because of wrap-around. To get around this, we will zoom in on subsets of $S$ and $T$ where some version of the homomorphic property holds.

\subsection{Proof of \Cref{thm:linear-main}}

We start by partitioning $S$ (and $T$) into $m$ buckets, for a parameter $m$ which we will fix later (to equal $2$ or $3$). 

\noindent Define $S_{i, m}$ by:
\[ S_{i,m} := \left\{ (x, \dlog(x)) \in S \ : \ \frac{i-1}{m}(p-1) \leq \dlog(x) < \frac{i}{m}(p-1) \right\} \]
then note that $S_{i,m}$ for $i \in [m]$ partition $S$.

Further, let $T$ and $T_{i,m}$ denote the respective projections of these sets to their first coordinate. That is, $T = \{x : (x, \dlog(x)) \in S\}$ and similarly for $T_{i, m}$.

We define a $k$-fold ``sum and product'' operation $U^{\star k}$ on subsets $U \subset \F_p$. This operation is motivated by the property of the (standard) logarithm: $\log(ab) = \log(a) + \log (b)$.

\begin{definition}
    For sets $U \subseteq \F_p$ and $k \in \N$, let 
    \[ U^{\star k} = \{ (u_1 + u_2 + \dotsb + u_k, u_1 u_2 ... u_k) \in \F_p^2 : u_i \in U \ \forall \ i\in [k] \}. \]
\end{definition}
We will now prove some properties of the sets $T_{i, m}$ under this operation. 

\begin{lemma} \label{lem:star size}
    For any $i, m$, $|T_{i, m}^{\star m}| = |m \cdot S_{i, m}|$.
\end{lemma}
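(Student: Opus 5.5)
We plan to exhibit an explicit bijection between $m\cdot S_{i,m}$ and $T_{i,m}^{\star m}$. Both sets are images of the same index set $T_{i,m}^m$ of ordered tuples $(u_1,\dots,u_m)$. We define
\[
\Phi(u_1,\dots,u_m) = \Bigl(\textstyle\sum_j u_j,\ \sum_j \dlog(u_j)\Bigr) \in m\cdot S_{i,m},
\qquad
\Psi(u_1,\dots,u_m) = \Bigl(\textstyle\sum_j u_j,\ \prod_j u_j\Bigr) \in T_{i,m}^{\star m},
\]
where the second coordinate of $\Phi$ is read in $\F_p$. Here we use that $S_{i,m}$ consists of the pairs $(x,\dlog(x))$ with $x\in T_{i,m}$. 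It then suffices to show that $\Phi(\mathbf u)=\Phi(\mathbf v)$ if and only if $\Psi(\mathbf u)=\Psi(\mathbf v)$. This gives a well-defined bijection $\Phi(\mathbf u)\mapsto \Psi(\mathbf u)$ between the two images, and hence the equality of sizes.

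The key step is to control wrap-around. For $u_j\in T_{i,m}$ we have $\dlog(u_j)\in[\tfrac{i-1}{m}(p-1),\tfrac{i}{m}(p-1))$. Therefore the integer sum $\sigma(\mathbf u)=\sum_j \dlog(u_j)$ lies in the half-open interval $[(i-1)(p-1),\, i(p-1))$, which has length exactly $p-1=N$. Two integers in such an interval are congruent mod $N$ if and only if they are equal. They are also congruent mod $p$ if and only if they are equal, because the interval has length $N<p$. Since $g$ has order exactly $N$, we get
\[
\textstyle\prod_j u_j = g^{\sigma(\mathbf u)} \quad\text{and}\quad \prod_j u_j=\prod_j v_j \iff \sigma(\mathbf u)\equiv \sigma(\mathbf v)\pmod N \iff \sigma(\mathbf u)=\sigma(\mathbf v).
\]
Likewise, $\sigma(\mathbf u)\equiv\sigma(\mathbf v)\pmod p$ holds if and only if $\sigma(\mathbf u)=\sigma(\mathbf v)$. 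Combining these, the second coordinates of $\Phi$ agree exactly when the second coordinates of $\Psi$ agree. The first coordinates are identical in both maps, so this proves the claimed equivalence.

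The only delicate point, and what we regard as the main obstacle, is the bookkeeping of the two moduli. The sum of discrete logs is reduced mod $p$ when it is viewed in $\F_p^2$, while the exponent of $g$ is reduced mod $p-1$. The bucketing into intervals of length $(p-1)/m$ is exactly what places $\sigma$ in a window of length $N$, and that window is short enough for both reductions to be injective. We must check that the interval is half-open, so that its length is $N$ and not $N+1$, and that this length is less than $p$, so that reduction mod $p$ is injective on it.
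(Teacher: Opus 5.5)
Your proof is correct and is essentially the paper's argument. The paper writes down the explicit map $\phi(a,b) = (a, \dlog(b) - (i-1))$ from $T_{i,m}^{\star m}$ to $m\cdot S_{i,m}$, which is exactly the inverse of your correspondence $\Phi(\mathbf u)\mapsto \Psi(\mathbf u)$, and it verifies well-definedness, injectivity and surjectivity using the same observation you isolate: $\sum_j \dlog(u_j)$ lies in $\{(i-1)N,\dots,iN-1\}$, so there is no wrap-around modulo either $N$ or $p$.
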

Recall that $m \cdot V$ denotes the $m$-fold sumset $V+V+ \ldots + V$.
\begin{proof}
We prove this by defining a bijection 
\[
\begin{split}
    \phi : T_{i, m}^{\star m} & \to m \cdot S_{i, m} \\
    (a, b) & \mapsto (a, \dlog(b) - (i-1)).
\end{split}
\]

We first have to show this map is well-defined, that is, the image of any element under the map lies in $m \cdot S_{i, m}$ as claimed. Take $(a, b) \in T_{i, m}^{\star m}$. Then, there exist $a_1, \dots, a_m \in T_{i, m}$ such that $a = a_1 + \dotsb + a_m$, $b = a_1 a_2 \dots a_m$. Let $b_j = \dlog(a_j)$ for each $j$; we are left to show $\dlog(b) - (i-1) = b_1 + \dotsb + b_m$. Since each $a_j \in T_{i, m}$, we have $\frac{i-1}{m}(p-1) \leq b_j < \frac{i}{m}(p-1)$. Then, $(i-1)(p-1) \leq \sum b_j < i(p-1) $. 
\[ 
\begin{split}
    \dlog(b) & = \dlog(a_1 a_2 \dots a_m) \\
    & = \dlog(a_1) + \dotsb \dlog(a_m) - (i-1)(p-1) \text{ since } (i-1)(p-1) \leq \sum \dlog(a_j) < i(p-1)\\
    & \equiv b_1 + \dotsb + b_m + (i-1) \mod p.
\end{split}
\]

This shows the map is well-defined. Now we are left to show injectivity and surjectivity.

For injectivity, suppose $\phi(a_1, b_1) = (a_2, b_2)$. Since $\phi$ fixes the first coordinate, we have $a_1 = a_2$. Further, we have $\dlog(b_1) = \dlog(b_2)$ which implies $b_1 = b_2$.

For surjectivity, let $(a', b') = (a_1 + \dotsb + a_m, \dlog(a_1) + \dotsb + \dlog(a_m)) \in m \cdot S_{i, m}$. By a similar argument to the well-definedness case, we can pick $(a, b)$ with $a = a'=  \sum a_j$ and $b= \prod_j a_j$, which would give $\phi(a, b) = (a', b')$ as required. 
\end{proof}

\begin{proposition} \label{prop:size of 2S}
    Let $S \subseteq \F_p^2$ be the agreement set as above and $T$ its projection onto the first coordinate. Then, $|2 \cdot S| \geq \frac{|T|^2}{8}$.
\end{proposition}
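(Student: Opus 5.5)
We would prove $|2\cdot S|\geq |T|^2/8$ by restricting to a single bucket with $m=2$ and using \cref{lem:star size}. Since $S_{1,2}$ and $S_{2,2}$ partition $S$, one of them, say $S_{i,2}$, has $|T_{i,2}|\geq |T|/2$. Because $S_{i,2}\subseteq S$, we have $2\cdot S\supseteq 2\cdot S_{i,2}$. By \cref{lem:star size} with $m=2$,
\[ |2\cdot S| \geq |2\cdot S_{i,2}| = |T_{i,2}^{\star 2}|. \]
So it suffices to lower bound $|U^{\star 2}|$ for a set $U=T_{i,2}\subseteq\F_p^*$ with $|U|\geq |T|/2$.

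The key step is to show that $|U^{\star 2}|\geq \binom{|U|}{2}$, or at least $|U|^2/2$. The map $(u_1,u_2)\mapsto(u_1+u_2,u_1u_2)$ is at most $2$-to-$1$ on ordered pairs. Indeed, $(s,t)$ determines the multiset $\{u_1,u_2\}$ as the roots of $Z^2-sZ+t$, and a degree-$2$ polynomial over the field $\F_p$ has at most two roots. Hence unordered pairs (including $u_1=u_2$) map injectively, and
\[ |U^{\star 2}| \geq \frac{|U|(|U|+1)}{2}\geq \frac{|U|^2}{2}. \]
Combining the two steps gives
\[ |2\cdot S|\geq \frac{|U|^2}{2}\geq \frac{(|T|/2)^2}{2}=\frac{|T|^2}{8}, \]
which is exactly the claim. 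The pigeonhole onto the larger bucket is what costs the factor $1/4$, and the Vieta injectivity costs the factor $1/2$.

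The only step needing care is applying \cref{lem:star size} correctly. For points in a single bucket with $m=2$, the sum of two $\dlog$ values stays within one wrap-around window, so the bijection holds. This is already handled by the lemma, so no real obstacle remains. If there is any doubt about that lemma at $i=2$, the fallback is direct. Two points $(x_1,\dlog x_1)$ and $(x_2,\dlog x_2)$ of $S_{i,2}$ have $(i-1)(p-1)\leq\dlog x_1+\dlog x_2<i(p-1)$. Therefore their sum determines $x_1+x_2$ and $\dlog(x_1x_2)$, hence determines $x_1x_2$. So distinct unordered pairs $\{x_1,x_2\}$ give distinct sums, which yields the same bound.
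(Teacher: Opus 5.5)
Your proposal is correct and is essentially the paper's proof. You pigeonhole onto the larger of $T_{1,2},T_{2,2}$, pass to $|T_{i,2}^{\star 2}|$ via \cref{lem:star size}, and use that $(a_1+a_2,a_1a_2)$ determines the multiset $\{a_1,a_2\}$ to get $|T_{i,2}^{\star 2}|\geq |T_{i,2}|^2/2\geq |T|^2/8$ (one wording slip: $\binom{|U|}{2}$ alone would fall short of $|U|^2/2$, but the bound you actually prove, $\binom{|U|+1}{2}$, suffices).
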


\begin{proof}
    Recall that $T$ may be subdivided into $T_{1, 2}$ and $T_{2, 2}$. Without loss of generality we assume $|T_{1, 2}| \geq |T|/2$, otherwise in the below argument replace $T_{1, 2}$ with $T_{2,2}$.

Now since $S_{1, 2} \subseteq S$, 
\[
|2 \cdot S| \geq |2 \cdot S_{1, 2}| = |T_{1, 2}^{\star 2}|
\]
    where the equality follows from \cref{lem:star size}.

We also have that $|T_{1, 2}^{\star 2}|$ is at least the number of pairs (with repetition) up to ordering in $T_{1, 2}$. This is because an element of $|T_{1, 2}^{\star 2}|$ is a pair of the form $(a_1 + a_2, a_1 a_2)$, which determines $a_1$ and $a_2$ in $T_{1, 2}$ up to labeling. Hence, 
\[ |T_{1, 2}^{\star 2}| \geq \frac{|T_{1, 2}|^2}{2} \]

Combining with the above, we have 
\[ |2 \cdot S| \geq \frac{|T_{1, 2}|^2}{2} \geq \frac{|T|^2}{8} \]

    which completes the proof.
\end{proof}

We are now ready to prove the main theorem of the section.

\begin{proof}[Proof of \cref{thm:linear-main}]
    Let $c$ be a constant that we will fix later; we will eventually show $|S| < c (p^{1/2-\eps})$ for some $\eps>0$.


    Assume for contradiction that $|S| \geq c (p^{1/2 -\eps})$. Divide $S$ into $S_{1, 3}, S_{2, 3}$ and $S_{3,3}$ as defined above (that is, $m=3$). At least one of these has size at least $|S|/3$; replace $S$ by that set $S_{i, 3}$ and drop any extraneous elements so that $|S| = \frac{c}{3} (p^{1/2 - \eps})$ exactly.

    Let $T$ be the projection of $S$ to the first coordinate, and thus $|T| = |S|$. Recall that $T^{\star 2}$ consists of pairs; take a pair $(a, b) \in T^{\star 2} $ and define the line $L_{a, b} = \{ (a+ z, bz) \ : \ z \in \F\}$. Let $\calL = \{L_{a, b} : (a, b) \in T^{\star 2}\}$. This will be the set of lines that we invoke the line-point incidence result with. Observe that $|\calL| = |T^{\star 2}|$.

    Now, since every $(a, b) \in T^{\star 2} $ comes from two elements in $T$ up to ordering, we have 
    \[
    \frac{|T|^2}{2} \leq |T^{\star 2}| \leq |T|^2
    \]

    Substituting the bounds on $T$ as per the contradiction assumption, and the observation about $\calL$,
    \[
    \frac{c^2 p^{1-2\eps}}{18} \leq |\calL| \leq \frac{c^2 p^{1-2\eps}}{9}.
    \]

    The set of points $\calP$ we will use will be $T^{\star 3}$. By \cref{lem:star size}, $|T^{\star 3}| = |3 \cdot S|$. Recall that $S$ was a subset of the original line that purported to approximate $\dlog$. Hence, $3 \cdot S$ is also a subset of some line, and hence $|3 \cdot S| \leq p$. Then, we can conclude that $|T ^{\star 3}| \leq p$.

    We need to lower-bound the size of $\calP$ as well. We have

    \[
    |T^{\star 3}| = |3 \cdot S| \geq |2 \cdot S| \geq \frac{|T|^2}{8}.
    \]

    where the last inequality is by \cref{prop:size of 2S}. Then, combined with the contradiction assumption, this gives 
    
    \[
    \frac{c^2}{72}p^{1-2\eps} \leq |\calP| \leq p.
    \]

    Now, $\calP$ consists of points $( a_1 + a_2 + a_3, a_1 a_2 a_3)$ while $\calL$ consists of lines of the form $\{( a_1 + a_2 + z, a_1 a_2 z) : z \in \F_p\}$. From this form, it is clear that any $z \in T$ and $L_{a, b} \in \calL$ gives an incidence. Then, 
    \[
    \calI(\calP, \calL) \geq |T| |\calL| \geq \frac{c^3 p^{3/2-3\eps}}{54}
    \]


    For a large enough prime $p$, we have $|\cP|\cdot |\cL| \leq \frac{c^2}{9}p^{2-2\epsilon} \leq p^3$, which is the hypothesis of \cref{thm:line-point-incidence-strong}.

    Then, we can invoke the bound to conclude 
    \begin{align*}
        |\cI(\cP,\cL)| &\leq 3(|\cL|\cdot |\cP|)^{2/3} + |\cL| + |\cP|\\
        &\leq 3\cdot \left(\frac{c^2}{9}p^{2-2\epsilon}\right)^{2/3} + \left( \frac{c^2 p^{1-2\epsilon}}{9} \right) + p\\
        &\leq 3\left(\frac{c^2}{9}\right)^{2/3} p^{\frac{4(1-\epsilon)}{3}} + O(p)
    \end{align*}
which is a contradiction for $\eps = 1/10$ and a large enough constant $c$.
    
\end{proof}
 
\bibliographystyle{alpha}
\bibliography{ref.bib}

@article{CS00,
	author = {Coppersmith, Don and Shparlinski, Igor},
	date = {2000/06/01},
	doi = {10.1007/s001450010002},
	id = {Coppersmith2000},
	isbn = {1432-1378},
	journal = {Journal of Cryptology},
	number = {3},
	pages = {339--360},
	title = {On Polynomial Approximation of the Discrete Logarithm and the {D}iffie--{H}ellman Mapping},
	url = {https://doi.org/10.1007/s001450010002},
	volume = {13},
	year = {2000}}

@article{Winterhof,
  author  = {Winterhof, Arne},
  title   = {Polynomial Interpolation of the Discrete Logarithm},
  journal = {Designs, Codes and Cryptography},
  volume  = {25},
  number  = {1},
  pages   = {63--72},
  year    = {2002},
  doi     = {10.1023/A:1012556500517}
}

@article{KW,
  author  = {Kiltz, Eike and Winterhof, Arne},
  title   = {Polynomial interpolation of cryptographic functions
             related to {Diffie--Hellman} and discrete logarithm problem},
  journal = {Discrete Applied Mathematics},
  volume  = {154},
  number  = {2},
  pages   = {326--336},
  year    = {2006},
  doi     = {10.1016/j.dam.2005.03.030}
}

@article{IW,
  author  = {I{\c{s}}{\i}k, Leyla and Winterhof, Arne},
  title   = {{Carlitz} rank and index of permutation polynomials},
  journal = {Finite Fields and Their Applications},
  volume  = {49},
  pages   = {156--165},
  year    = {2018},
  doi     = {10.1016/j.ffa.2017.09.002},
  eprint  = {1611.06361},
  archivePrefix = {arXiv}
}

@misc{BW,
  author  = {Bienvenu, Pierre-Yves and Winterhof, Arne},
  title   = {Additive index and {Carlitz} rank},
  year    = {2026},
  eprint  = {2604.25414},
  archivePrefix = {arXiv},
  url     = {https://arxiv.org/abs/2604.25414}
}

@article{NW,
  author  = {Niederreiter, Harald and Winterhof, Arne},
  title   = {Incomplete Character Sums and Polynomial Interpolation
             of the Discrete Logarithm},
  journal = {Finite Fields and Their Applications},
  volume  = {8},
  number  = {2},
  pages   = {184--192},
  year    = {2002},
  doi     = {10.1006/ffta.2001.0334}
}

@article{Gow01,
	author = {Gowers, W. T. },
	date = {2001/08/01},
	doi = {10.1007/s00039-001-0332-9},
	id = {Gowers2001},
	isbn = {1420-8970},
	journal = {Geometric \& Functional Analysis GAFA},
	number = {3},
	pages = {465--588},
	title = {A new proof of {S}zemer{\'e}di's theorem},
	url = {https://doi.org/10.1007/s00039-001-0332-9},
	volume = {11},
	year = {2001}}

@book{fulton1989algebraic,
  title     = {Algebraic Curves: An Introduction to Algebraic Geometry},
  author    = {Fulton, William},
  year      = {1989},
  publisher = {Addison-Wesley Publishing Company},
  series    = {Advanced Book Classics},
  isbn      = {978-0201510102}
}

@article{GT17,
author = {Green, Ben and Tao, Terence},
title = {NEW BOUNDS FOR {S}ZEMER\'eDI'S THEOREM, III: A POLYLOGARITHMIC BOUND FOR},
journal = {Mathematika},
volume = {63},
number = {3},
pages = {944-1040},
doi = {https://doi.org/10.1112/S0025579317000316},
url = {https://londmathsoc.onlinelibrary.wiley.com/doi/abs/10.1112/S0025579317000316},
eprint = {https://londmathsoc.onlinelibrary.wiley.com/doi/pdf/10.1112/S0025579317000316},
year = {2017}
}

@article{leng2024improved,
  title={Improved Bounds for {S}zemer\'edi's Theorem},
  author={Leng, James and Sah, Ashwin and Sawhney, Mehtaab},
  journal={arXiv preprint arXiv:2402.17995},
  year={2024}
}

@article{GoldreichRS00,
  author       = {Oded Goldreich and
                  Ronitt Rubinfeld and
                  Madhu Sudan},
  title        = {Learning Polynomials with Queries: The Highly Noisy Case},
  journal      = {{SIAM} J. Discret. Math.},
  volume       = {13},
  number       = {4},
  pages        = {535--570},
  year         = {2000},
  url          = {https://doi.org/10.1137/S0895480198344540},
  doi          = {10.1137/S0895480198344540},
  bibsource    = {dblp computer science bibliography, https://dblp.org}
}

@ARTICLE{BourgainKT2004,
  title     = "A sum-product estimate in finite fields, and applications",
  author    = "Bourgain, Jean and Katz, Nets and Tao, Terence",
  journal   = "Geom. Funct. Anal.",
  publisher = "Springer Science and Business Media LLC",
  volume    =  14,
  number    =  1,
  pages     = "27--57",
  month     =  feb,
  year      =  2004
}

@article{Bourgain,
author = {Bourgain, J.},
title = {MORE ON THE SUM-PRODUCT PHENOMENON IN PRIME FIELDS AND ITS APPLICATIONS},
journal = {International Journal of Number Theory},
volume = {01},
number = {01},
pages = {1-32},
year = {2005},
doi = {10.1142/S1793042105000108},

URL = { 
        https://doi.org/10.1142/S1793042105000108
},
eprint = { 
        https://doi.org/10.1142/S1793042105000108
}
}

@inproceedings{Rao,
  title={An exposition of {B}ourgain’s 2-source extractor},
  author={Rao, Anup},
  booktitle={Electronic Colloquium on Computational Complexity (ECCC)},
  volume={14},
  number={034},
  year={2007},
  url = {https://eccc.weizmann.ac.il/report/2007/034/}
}

@article{KedlayaKopparty,
  title={On the degree of polynomials computing square roots mod p},
  author={Kedlaya, Kiran and Kopparty, Swastik},
  journal={arXiv preprint arXiv:2311.10956},
  year={2023}
}

@ARTICLE{DiffieHellman,
  author={Diffie, W. and Hellman, M.},
  journal={IEEE Transactions on Information Theory}, 
  title={New directions in cryptography}, 
  year={1976},
  volume={22},
  number={6},
  pages={644-654},
  doi={10.1109/TIT.1976.1055638}}

@article{brakensiek2026algorithmic,
  title={Algorithmic List Decoding of {R}eed-{S}olomon Codes up to Capacity},
  author={Brakensiek, Joshua and Chen, Yeyuan and Putterman, Aaron and Zhang, Zihan and Zheng, Kai Zhe},
  journal={arXiv preprint arXiv:2609.08005},
  year={2026}
}

@article{lewko2026szemer,
  title={A {S}zemer\'edi-{T}rotter Theorem in Arbitrary Fields},
  author={Lewko, Mark},
  journal={arXiv preprint arXiv:2609.27023},
  year={2026}
}

@article{miao2026szemer,
  title={The {S}zemer\'edi-{T}rotter Estimate in Finite Field with its Applications},
  author={Miao, Changxing and Xie, Rui},
  journal={arXiv preprint arXiv:2609.35190},
  year={2026}
}

@article{yu2026two,
  title={Two proofs and a tight {S}zemer\'edi--{T}rotter theorem},
  author={Yu, Hung-Hsun Hans},
  journal={arXiv preprint arXiv:2610.10649},
  year={2026}
}

@article{harsha2026reed,
  title={{R}eed-{S}olomon Codes at Capacity: Algorithmic List-Decoding and Proximity Gaps},
  author={Harsha, Prahladh and Kumar, Mrinal and Saptharishi, Ramprasad},
  journal={arXiv preprint arXiv:2610.08610},
  year={2026}
}

\end{document}